\newtheorem{theorem}{Theorem}
\newtheorem{lemma}{Lemma}
\newtheorem{proposition}{Proposition}
\theoremstyle{definition}
\newtheorem{proposition-definition}{Proposition-Definition}
\numberwithin{figure}{section}
\title{Clifford circuits over non-cyclic abelian groups}
\author[1,2]{Milo Moses}
\author[2,3]{Jacek Horecki}
\author[2,3]{Konrad Deka}
\author[2,4]{Jan~Tułowiecki}
\affil[1]{Department of Mathematics, UC Santa Barbara}
\affil[2]{BEIT Inc.}
\affil[3]{Institute of Mathematics, Jagiellonian University}
\affil[4]{Department of Theoretical Computer Science, Jagiellonian University}
\begin{document}

\maketitle

\newcommand{\RR}{\mathbb{R}}
\newcommand{\HH}{\mathbb{H}}
\newcommand{\NN}{\mathbb{N}}
\newcommand{\QQ}{\mathbb{Q}}
\newcommand{\CC}{\mathbb{C}}
\newcommand{\FF}{\mathcal{F}}
\newcommand{\ZZ}{\mathbb{Z}}
\newcommand{\st}{\,\,\mathrm{s.t.}\,\,}
\newcommand{\mm}{\mathfrak{m}}
\newcommand{\pp}{\mathfrak{p}}
\newcommand{\Hom}{\mathrm{Hom}}
\newcommand{\Aut}{\mathrm{Aut}}
\newcommand{\Cat}{\bold{Cat}}
\newcommand{\nullclass}{\left|\bold{0}\right>}
\newcommand{\alphaclass}{\left|\alpha\right>}
\newcommand{\betaclass}{\left|\beta\right>}
\newcommand{\alphabetaclass}{\left|\alpha\beta\right>}
\newcommand{\ppsi}{\left|\psi\right>}
\newcommand{\pphi}{\left|\phi\right>}
\newcommand{\vin}{\rotatebox[origin=c]{-90}{$\in$}}
\newcommand{\Ghat}{\widehat{G}}
\newcommand{\Pauli}{\mathrm{Pauli}}
\newcommand{\Cliff}{\mathrm{Cliff}}
\newcommand{\CX}{\text{C}X}
\newcommand{\polylog}{\mathrm{polylog}}
\newcommand{\diag}{\mathrm{diag}}
\newcommand{\Sp}{\mathrm{Sp}}

\begin{abstract}
We present a discussion of the generalized Clifford group over non-cyclic finite abelian groups. These Clifford groups appear naturally in the theory of topological error correction and abelian anyon models. We demonstrate a generalized Gottesman-Knill theorem, stating that every Clifford circuit can be efficiently classically simulated. We additionally provide circuits for a universal quantum computing scheme based on local two-qudit Clifford gates and magic states.
\end{abstract}

\section{Introduction}

The creation of a large-scale quantum computer is the ambitious goal of many scientists. One of the major steps towards the realization of this goal is the creation of low-overhead error correcting codes on which a universal set of logical gates can be applied using fault-tolerant methods. These error correcting codes are typically stabilizer codes, and the fault-tolerant gates are typically made from constant-depth local unitary circuits.

It is here that one runs into the fundamental issue of error correction: constant-depth local unitary circuits on two dimensional stabilizer codes can only act on the codespace by Clifford gates. This is known as the  Bravyi-Kong bound \cite{bravyi2013classification}, and along with related no-go theorems like the Eastin-Knill theorem \cite{eastin2009restrictions} it sets out a clear vision of where the difficulty in error correction lies. Not only is the Clifford gate a proper subset of all possible quantum gates, but the celebrated Gottesman-Knill theorem \cite{gottesman1998heisenberg} asserts that the result of any quantum computation using only Clifford gates can be efficiently simulated on a classical computer. Hence, the future of quantum computation relies on the use of tricks to circument no-go theorems. For this reason as well as others, Clifford gates have attracted enormous attention \cite{bravyi2005universal, hostens2005stabilizer, brown2017poking, zeng2008semi, rengaswamy2019unifying, cui2017diagonal}.

In this paper we will be deeply analyzing computations with a generalized version of the Clifford group. To understand this generalization, it is best to recall the history of topological quantum computation. The first topological codes have their origins in an unexpected connection to condensed matter physics \cite{kitaev2003fault}. They were discovered by Kitaev in his study of discrete gauge theories, where it was shown that the mathematics governing topological phases of matter directly yield error correction schemes \cite{kitaev1997quantum}. When the gauge group is $G=\ZZ_2$, this is the surface code. In general, the gauge group $G$ can be an arbitrary finite group.

When $G$ is non-abelian, the theory quickly becomes extremely subtle \cite{bombin2008family, yan2022ribbon, cowtan2022quantum}. When $G$ is abelian, however, the error correction procedure is unitary and efficient to perform \cite{anwar2014fast, bullock2007qudit}. The associated stabilizer formalism is not based on the Pauli group, but is instead based on a generalized ``$G$-Pauli group".

Interestingly enough, these $G$-Pauli groups were already known to quanutm information reserachers a few years before Kitaev's introduction of topological quantum computing \cite{knill1996group, knill1996non}. Overwhelmingly, however, when quantum information researchers study generalized stabilizer formalisms they focus on the case where $G=\ZZ_d$ is cyclic \cite{hostens2005stabilizer, de2013linearized, duclos2013kitaev, gottesman1998fault, wang2020qudits}. This focus on $G=\ZZ_d$ also appears in condensed matter physics circles, because cyclic groups arise from symmetry breaking of $U(1)$ gauge theories whereas non-cyclic abelian groups must arrise from $U(1)^N$ Chern-Simons thoeries with $N\geq 2$ \cite{cano2014bulk}.

The case of general abelian groups $G$ has been neglected to such an extent that some statements about abelian topological error correcting codes which are accepted as fact in topological quantum computing commuties have yet to explicitly appear in literature. This article demonstrates some of these results. Specifically, we prove a generalized Gottesman-Knill theorem showing that $G$-Clifford gates can be efficiently classically simulated, as well as writing circuits for a``Clifford + magic state"-type universal quantum computation scheme.

There are many different topoloigcal quantum error correcting schemes, all of which require slightly different logical state encodings and gate implimentations \cite{brown2017poking}. For this reason we restrict our attention to an implementation-agnostic description of the circuits, which only relies on the general picture. This general picture states that it is straightforward to construct one-qudit Pauli operations. Using some extra techniques, it is not too difficult to construct one-qudit Clifford operations. Using some sort of procedure such as merging lattice squares or braiding holes, some two-qudit Clifford gates can be implemented. Past that, there are no other native topologically protected operations. Using a magic state distillation technique one can obtain qubits prepared in predetermined states to high fidelity \cite{bravyi2005universal}. Quantum teleportation is then used to perform universal quantum computation \cite{gottesman1999quantum}.

One subtelty which does not appear in the case where $G=\ZZ_d$ is cyclic is that the $G$-Clifford group is $\textit{not}$ generated by one-qudit Clifford gates and the two-qudit $\CX$ gate. It is true that every Clifford gate can be generated by one and two-qudit gates, but the $\CX$ gate is not enough to generate all two-qudit gates. It is not clear how one would implement these exotic gates using lattice surgery, so this is a topic for future research.

Some of the most relevant works to this current discussion are a series of papers studying the power of quantum computation based on the $G$-Clifford group for finite abelian groups $G$ \cite{bermejo2014classical, van2013efficient}. These papers prove a Gottesman-Knill-type theorem for $G$-Clifford circuits showing that they can be efficiently classically simulated in polynomial time, generalizing results for qubits and qudits \cite{gottesman1998heisenberg, de2013linearized}. This Gottesman-Knill theorem is proved with a caveat, however. It only applies to circuits in a distinguished subclass of Clifford gates generated by Fourier transforms, automorphism gates, and quadratic phase gates. An outstanding conjecture is to demonstrate that these gates generate all Clifford gates, and thus that the generalized Gottesman-Knill theorem holds without extra assumptions. The proof of this conjecture is a straightforward group-theoretic generatlization of the qudit linear algebra performed in \cite{hostens2005stabilizer}, which we give in Section \ref{generating set appendix}. We additionally show that this gate set has redundancies which can be removed. These older works make no explicit connection between their results and error correcting codes, and they seem to have gone virtually unnoticed by the topological quantum computing community.

Even though $G$-Clifford circuits can be efficiently classically simulated for all finite abelian groups $G$, there still can be benefits of computing on groups larger than $\ZZ_2$. When $G=\ZZ_3$, projective charge measurements on boundary defects allow for universal quantum computation, which is epsically relevant becasue the $G=\ZZ_3$ quantum double is predicted to describe the $\nu=1/3$ filling of the fractional quantum Hall system \cite{cong2017universal, cong2016topological}. At the ends of defect lines in abelian models one gets parafermions. When $G=\ZZ_2$ there parafermions are Majorana fermions, which do not natively allow for universal quantum computation. When $d\geq 3$ however these parafermions can be more powerful. This leads to an appealing universal quantum computing scheme based on $G=\ZZ_4$. In this case the quantum double can be nicely placed on a Kagome lattice with two-body nearest neighbor interactions, making it amenable to quantum processors \cite{hutter2015parafermions, hutter2016quantum}. Going further, the $G=\ZZ_6$ quantum double model has found applications in schemes for simulating powerful non-abelian quantum memories \cite{wootton2011engineering, wootton2010dissecting}. We hope that the results of this paper will help us find applications of quantum computing with non-cyclic abelian finite gauge groups.

In Sections \ref{pauli-group} and \ref{clifford-group} we introduce the $G$-Pauli and $G$-Clifford groups and demonstrate their key properties, some of which are new to the literature.

In Section \ref{CX gate} we give an algorithm to implement the $\CX$ gate using 2-qudit $G$-Pauli measurements.

In Section \ref{magic-states} we give an algorithm to implement non-Clifford gates using magic state injection.

In Section \ref{fourier-transform} we demonstrate that the global Fourier transform on any finite abelian group $G$ can be used to obtain Fourier transforms on split subgroups of $G$. This greatly simplifies our generating set for the Clifford gates.

In Section \ref{generating set appendix} we show that the Clifford group is generated by the subset of Fourier, automorphism, and controlled phase gates.

In Section \ref{conclusion} we give a conclusion, and list some problems which remain open in this area.

\section{The Pauli Group}
\label{pauli-group}

The Pauli matrices are core to quantum computing, and to quantum physics as a whole. They serve as our most elementary qubit operations, and are defined as

$$I=
\begin{pmatrix}
1 & 0\\
0 & 1
\end{pmatrix},
X=
\begin{pmatrix}
0 & 1\\
1 & 0
\end{pmatrix},
Y=
\begin{pmatrix}
0 & -i\\
i & 0
\end{pmatrix},
Z=
\begin{pmatrix}
1 & 0\\
0 & -1
\end{pmatrix}.$$

The special properties of these Pauli operators ies intimately linked to the algebraic structure of the group $\ZZ_2$. In fact, they can be readily generalized to an arbitrary finite abelian groups $G$ as follows. Instead of acting on the qubit $\CC^2$, the Pauli operators on $G$ (also known as ``$G$-Pauli operators") act on the group algebra $\CC[G]=\text{span}\{\left|g\right>,\,\, g\in G\}$ as follows:

\begin{align*}
X_g: \CC[G]&\xrightarrow{}\CC[G],\,\, g\in G\\
\left| h\right> &\mapsto \left| gh\right>
\end{align*}

\begin{align*}
Z_\chi: \CC[G]&\xrightarrow{}\CC[G],\,\, \chi\in \Ghat\\
\left| g\right> &\mapsto \chi(g)\left| g\right>
\end{align*}

where $\Ghat=\Hom(G,\CC^\times)$ is the character group of $G$, written in multiplicative notation. The full space of Pauli operators is the group generated by the $X_g,Z_\chi$ up to global phases in $U(1)=\{\lambda\in \CC \st |\lambda|=1\}$,

$$\Pauli(G)=\left\{\omega X_g Z_\chi,\,\, (g,\chi)\in G\times \Ghat,\,\, \omega\in U(1)\right\}.$$

We summarize the basic properties of these operators below:

\begin{proposition} Let $G$ be a finite abelian group. The following statements about $\Pauli(G)$ are all true:

\begin{enumerate}
\item All Pauli operators are unitary.

\item We have that $X_{g_0}X_{g_1}=X_{g_0g_1}$ and $Z_{\chi_0}Z_{\chi_1}=Z_{\chi_0\chi_1}$ for all $g_0,g_1\in G$, $\chi_0,\chi_1\in \Ghat$. In particular $X_g$ operators commute with each other, as do $Z_\chi$ operators .

\item $Z_\chi X_g=\chi(g)X_gZ_\chi$ for all $(g,\chi)\in G\times \Ghat$.

\item If a unitary operator $U:\CC[G]\to\CC[G]$ commutes with all elements of $\Pauli(G)$ it must be the identity.

\item If a unitary operator $U:\CC[G]\to\CC[G]$ commutes with all elements of $\Pauli(G)$ up to phase it must be a Pauli operator.

\end{enumerate}
\end{proposition}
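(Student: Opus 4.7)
The plan is to dispatch parts (1)--(3) by direct verification on basis vectors, handle (4) via simultaneous diagonalisation, and deduce (5) from (4) by a normaliser argument. For (1), $X_g$ is a permutation matrix in the basis $\{\left|h\right>\}_{h\in G}$ and hence unitary, while $Z_\chi$ is diagonal with entries $\chi(h)$, each of which lies in $U(1)$ because $G$ is finite (so $\chi(h)$ is a root of unity); global phases are unit scalars by definition. Parts (2) and (3) fall out of applying both sides to a basis vector $\left|h\right>$: multiplicativity of the group law gives $X_{g_0}X_{g_1}\left|h\right>=\left|g_0g_1h\right>$ (and similarly for $Z$), while multiplicativity of $\chi$ gives $Z_\chi X_g\left|h\right>=\chi(gh)\left|gh\right>=\chi(g)X_gZ_\chi\left|h\right>$.

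For (4), observe that the commuting family $\{Z_\chi\}_{\chi\in\Ghat}$ is simultaneously diagonal in the basis $\{\left|g\right>\}_{g\in G}$, and because $\Ghat$ separates points of $G$ the common eigenspaces are the individual lines $\CC\cdot\left|g\right>$. Any $U$ commuting with every $Z_\chi$ is therefore diagonal, so write $U\left|g\right>=u_g\left|g\right>$. Imposing $UX_h=X_hU$ on the basis vector $\left|g\right>$ yields $u_{hg}=u_g$ for every $g,h\in G$; setting $g=e$ gives $u_h=u_e$ for all $h$, so $U=u_e\cdot I$ acts as a global phase and is thus the identity of $\Pauli(G)$.

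For (5), assume $UPU^{-1}=\omega_P P$ for each $P\in\Pauli(G)$, and define $\omega(g):=\omega_{X_g}$ and $\eta(\chi):=\omega_{Z_\chi}$. Substituting the factorisation $X_{gh}=X_gX_h$ from (2) into $UX_{gh}=\omega(gh)X_{gh}U$ and expanding the other way forces $\omega(gh)=\omega(g)\omega(h)$, so $\omega\in\Ghat$; call it $\chi_0$. By the same argument $\eta$ is a character of $\Ghat$, and Pontryagin duality $\widehat{\Ghat}\cong G$ produces a unique $g_0\in G$ with $\eta(\chi)=\chi(g_0)$. I would then set $P_0:=X_{g_0^{-1}}Z_{\chi_0}$ and verify, using (3) to commute $X_g$ past $Z_{\chi_0}$, that $V:=P_0^{-1}U$ commutes with every $X_g$ and every $Z_\chi$: the phase $\chi_0(g)$ picked up by $U$ when it crosses $X_g$ is exactly cancelled by the phase $\chi_0(g)^{-1}$ acquired when $X_g$ is pushed through $P_0^{-1}$, and symmetrically for $Z_\chi$. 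Part (4) then gives $V=\lambda I$ for some $\lambda\in U(1)$, whence $U=\lambda P_0\in\Pauli(G)$.

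The only genuine obstacle is the bookkeeping in the last step: one must identify the correct Pauli exponents $(g_0^{-1},\chi_0)$ (rather than $(g_0,\chi_0^{-1})$ or some variant) so that the phases acquired by conjugation cancel rather than compound. Once (3) has been used carefully to dictate this sign convention, the verification that $V$ is central in $\Pauli(G)$ is mechanical, and everything else in the proposition reduces to one-line computations on $\left|h\right>$ or to the spectral theorem for commuting unitaries.
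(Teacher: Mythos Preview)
Your argument is correct and follows essentially the same route as the paper: parts (1)--(3) by direct computation, part (4) via the fact that $\Ghat$ separates points of $G$ so commutation with all $Z_\chi$ forces diagonality, and part (5) by showing the phase maps $g\mapsto\omega_{X_g}$ and $\chi\mapsto\omega_{Z_\chi}$ are characters and invoking Pontryagin duality. The only organisational difference is in the final step of (5): the paper, having identified $g_0$ and $\chi_0$, recomputes $U\ket{h}$ directly in the style of (4) to read off $U=\lambda X_{g_0}Z_{\chi_0}$, whereas you factor out the candidate Pauli $P_0$ and apply (4) to $P_0^{-1}U$; the two are equivalent and your reduction is arguably the cleaner packaging.
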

\begin{proof} Points (1)-(3) follow immediately from the definitions of $X_g$ and $Z_\chi$. For point (4), suppose that 

$$U \ket{1}=\sum_{g\in G} c_g\ket{g}.$$

We have that

$$Z_\chi U\ket{1}=UZ_\chi\ket{1}=U\ket{1}\,\, \forall \chi\in\Ghat$$

and hence $\chi(g)c_g=c_g$ for all $\chi\in\Ghat$. Thus, $U\ket{1}=\lambda \ket{1}$ for some $\lambda\in\CC$. Now,

$$U\ket{g}=UX_g\ket{1}=X_gU\ket{1}=\lambda \cdot X_g\ket{1}=\lambda \ket{g}$$

and hence $U=\lambda I$. Thus, $U$ is the identity up to a global phase. We now move on to point (5). By assumption, $Z_\chi U$ equals  $UZ_\chi$ up to a phase for every $\chi\in \Ghat$. The map $\Ghat\to\CC^\times$ sending $\chi$ to this phase is clearly a homomorphism. Thus, since $G$ is isomorphic to its double dual, there exists $g\in G$ such that $Z_\chi U=\chi(g)UZ_\chi$ for all $\chi\in\Ghat$. If $U \ket{1}=\sum_{h\in G} c_h\ket{h}$ as before, then we have $\chi(h)c_h=\chi(g)c_h$. Thus, $U\ket{1}=\lambda \ket{g}$.

Now, we again know by assumption that $UX_g$ equals $X_g U$ up to a phase. The map $G\to\CC^\times$ sending a group element to the phase it induces is again a homomorphism, and thus there exists a character $\chi$ such that $UX_g=\chi(g)X_gU$. Hence,

$$U\ket{h}=UX_h\ket{1}=\chi(h)X_hU\ket{1}=\lambda \cdot \chi(h)X_h\ket{g}=\lambda\cdot \chi(h)\ket{hg}.$$

Thus, $U=\lambda \cdot X_g Z_\chi$ so $U$ is a Pauli operator as desired.
\end{proof}

The $G$-Pauli group on $n$ qudits is the space of operators on $\CC[G]^n$ which are the tensor products of $G$-Pauli matrices on each component. Alternatively, one can identify this group with $\Pauli\left(G^n\right)$ via the canonical isomorphism $\CC[G^n]\cong \CC[G]^n$. Note however that we do \textit{not} have an isomorphism $\Pauli(G\times H)\cong \Pauli(G)\otimes \Pauli(H)$. This is because if $\lambda\in U(1)$ is a phase, $\lambda\otimes 1$ and $1\otimes \lambda$ are different objects in $\Pauli(G)\otimes \Pauli(H)$. Thus, to make these two objects equal one has to quotent by the relation $1\otimes \lambda=\lambda\otimes1$. Or, more abstractly, we have that $\Pauli(G\times H)=\Pauli(G)\times_{U(1)}\Pauli(H)$ where the product is taken in the coslice-category of abelian groups with distinguished $U(1)$ subgroups.

\section{The Clifford group}
\label{clifford-group}

The set of gates which can be fault tolerantly implemented in the $G$ surface code is the $G$ Clifford group. This group can be succinctly described as the normalizer of the Pauli group. That is,

$$\Cliff(G)=\{U:\CC[G]\to\CC[G] \st UM U^\dagger\in \Pauli(G),\,\, \forall M\in \Pauli(G)\}.$$

It helps to work instead with an explicit generating set for $\Cliff(G)$. It is a standard fact that $\Cliff\left(\ZZ_2^n\right)$ is generated by the gates $H$, $S$, and $\CX$ where

\begin{align*}
H:\CC[\ZZ_2]&\xrightarrow{}\CC[\ZZ_2]\\
\left|0\right>&\mapsto \frac{\left|0\right>+\left|1\right>}{\sqrt{2}}\\
\left|1\right>&\mapsto \frac{\left|0\right>-\left|1\right>}{\sqrt{2}}
\end{align*}

is the Hadamard matrix,

\begin{align*}
S:\CC[\ZZ_2]&\xrightarrow{}\CC[\ZZ_2]\\
\left|0\right>&\mapsto \left|0\right>\\
\left|1\right>&\mapsto e^{\pi i /2}\left|1\right>
\end{align*}

is the $\pi/2$ phase gate, and

\begin{align*}
\CX: \CC[\ZZ_2\times \ZZ_2]&\xrightarrow{}\CC[\ZZ_2\times \ZZ_2].\\
\left|00\right>\mapsto \left|00\right>,&\,\, \left|01\right>\mapsto \left|01\right>\\
\left|10\right>\mapsto \left|11\right>,&\,\, \left|11\right>\mapsto \left|10\right>\\
\end{align*}

is the two-qubit controlled $X$ gate. These are generalized to the group theoretical setting as follows:

$$\text{Hadamard }\implies\text{Fourier transform}$$

$$\pi/2\text{ phase}\implies\text{Quadratic phase gate}$$

$$\CX\implies \text{Automorphism gates}$$

In Theorem \ref{main-theorem-1} we will show that these three families are enough to generate all Clifford gates. The automorphism gates $A_\tau$ are parameterized by automorphisms $\tau:G\to G$. They act by

\begin{align*}
A_\tau: \CC[G]&\to\CC[G].\\
\left| g\right>&\mapsto \left|\tau(g)\right>
\end{align*}

For example, if $G=\ZZ_2\times \ZZ_2$ the automorphism $\tau((a,b))=(a,a+b)$ induces the $\CX$ gate.

The quadratic phase gates $S_\xi$ are parameterized by \textit{quadratic forms} $\xi:G\to \CC^\times$. That is, maps $\xi$ such that the assignment 

$$g,h\mapsto \frac{\xi(gh)}{\xi(g)\xi(h)}$$

is a character in each component individually. They act by

\begin{align*}
S_\xi: \CC[G]&\to\CC[G].\\
\left| g\right>&\mapsto \xi(g)\left|g\right>
\end{align*}

The basic theory of quadratic forms is found in \cite{wall1963quadratic}. It should be emphasized that we do not require our forms to be homogenous. That is, it can be the case that $\xi(g^{-1})\neq\xi(g)$. It shouldn't be seen as a surprise that quadratic forms appear in our basic gates. Quadratic forms are ubiquitous within the study of abelian anyon models, appearing in multiple seemingly unrelated ways \cite{basak2015indicators, galindo2016solutions, wang2020and}.

The final type of Clifford gate is the Fourier transform. The Fourier transform is the map

\begin{align*}
\FF:\CC[G]&\to\CC[\Ghat]\\
\left|g\right>&\mapsto\frac{1}{\sqrt{|G|}}\sum_{\chi\in \Ghat}\overline{\chi(g)}\left|\chi\right>
\end{align*}

An issue is that the target is $\CC[\Ghat]$, not $\CC[G]$. For this reason one must parameterize the Fourier transform by some isomorphism $i:\Ghat\xrightarrow{\sim} G$, defining

\begin{align*}
\FF_i:\CC[G]&\to\CC[G].\\
\left|g\right>&\mapsto\frac{1}{\sqrt{|G|}}\sum_{\chi\in \Ghat}\overline{\chi(g)}\left|i(\chi)\right>.
\end{align*}

The formula for the inverse Fourier transform is

\begin{align*}
\FF_i^\dagger:\CC[G]&\to\CC[G],\\
\left|g\right>&\mapsto\frac{1}{\sqrt{|G|}}\sum_{\chi\in \Ghat}\chi(g)\left|\chi\circ i \right>.
\end{align*}

where we identify $\chi \circ i$ with a group element via the double dual isomorphism. We observe that $\chi\circ i=i(\chi)$ if and only if for all $\chi'\in \Ghat$ the equation $\chi'(i(\chi))=\chi(i(\chi'))$ holds, which a-priori it does not.

\begin{proposition}
\label{clifford-prop}
Let $G$ be a finite abelian group. For all $g\in G$, $\chi\in\Ghat$ we have that

\begin{enumerate}
\item $A_\tau$ is a Clifford gate, for automorphisms $\tau:G\xrightarrow{\sim} G$. Explicitly,

$$A_\tau X_g A_\tau^\dagger=X_{\tau(g)},$$  $$A_\tau Z_\chi  A_\tau^{\dagger}= Z_{\chi\circ \tau^{-1}}.$$

\item $S_\xi$ is a Clifford gate, for quadratic forms $\xi:G\to \CC^\times$. Explicitly,

$$S_\xi X_g S_\xi^\dagger = \xi(g)X_g Z_{b_\xi(g,-)},$$ $$ S_\xi Z_\chi S_\xi^\dagger = Z_\chi .$$

where $b_\xi(g,h)=\xi(gh)/(\xi(g)\xi(h))$

\item $\FF_{i}$ is a Clifford gate, for isomorphisms $i:\Ghat\xrightarrow{\sim}G$. Explicitly,

$$\FF_i X_g \FF_i^\dagger=Z_{\overline{g\circ i^{-1}}},$$ $$\FF_i Z_\chi\FF_i^\dagger = X_{i(\chi)},$$

where $\chi \circ i \in \widehat{\Ghat}$ is identified canonically with an element of $G$ via the double dual isomorphism.
\end{enumerate}
\end{proposition}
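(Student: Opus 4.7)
The plan is to verify each of the three conjugation formulas by applying both sides to an arbitrary basis vector $\ket{h}\in \CC[G]$ and checking that they agree; this suffices because a linear operator is determined by its action on a basis. Unitarity of $A_\tau$ and $S_\xi$ is immediate from their definitions, and unitarity of $\FF_i$ is a standard consequence of character orthogonality, so in each case matching the claimed Pauli formula automatically certifies the gate as an element of $\Cliff(G)$.

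Part (1) is the most routine: pushing $\ket{h}$ through $A_\tau^\dagger = A_{\tau^{-1}}$, then $X_g$, then $A_\tau$ yields $\ket{\tau(g)h}$, using only that $\tau$ is a group homomorphism; the $Z_\chi$ computation is dual and produces the character $\chi\circ\tau^{-1}$. For Part (2), $S_\xi$ is diagonal so it commutes with $Z_\chi$ on the nose, giving the second identity immediately. The first identity follows from $S_\xi X_g S_\xi^\dagger \ket{h} = \overline{\xi(h)}\,\xi(gh)\,\ket{gh}$, which the defining quadratic relation $\xi(gh) = \xi(g)\xi(h)\,b_\xi(g,h)$ collapses to $\xi(g)\,b_\xi(g,h)\,\ket{gh} = \xi(g)\,X_g\,Z_{b_\xi(g,-)}\,\ket{h}$. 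The only nontrivial ingredient is $|\xi(h)|=1$, which can be extracted in one line from the identity $\xi(g^n) = \xi(g)^n\,b_\xi(g,g)^{n(n-1)/2}$ applied to $n$ equal to the order of $g$, together with $\xi(1)=1$ and $b_\xi(g,g)^n = b_\xi(g^n,g) = 1$; that $b_\xi(g,-)\in\Ghat$ is precisely the bihomomorphism clause in the definition of a quadratic form, so $Z_{b_\xi(g,-)}$ is well-defined.

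Part (3), the Fourier transform, is the main obstacle, not because of any deep difficulty but because of bookkeeping: the parameterization by $i:\Ghat\xrightarrow{\sim} G$ and the double dual identification $g(\chi) = \chi(g)$ have to be tracked carefully. My approach is to set $j := i^{-1}$ and rewrite the transform symmetrically as $\FF_i\ket{g} = \frac{1}{\sqrt{|G|}}\sum_{h\in G}\overline{j(h)(g)}\,\ket{h}$, with adjoint $\FF_i^\dagger\ket{h} = \frac{1}{\sqrt{|G|}}\sum_{g\in G} j(h)(g)\,\ket{g}$ read off from matrix entries. Substituting into $\FF_i X_g \FF_i^\dagger\ket{h}$ and $\FF_i Z_\chi \FF_i^\dagger\ket{h}$, expanding the resulting double sums, and swapping the order of summation, the inner sum in each case reduces via Fourier orthogonality $\sum_{g'\in G}\chi'(g') = |G|\,\delta_{\chi',1}$ to a single surviving term. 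For $X_g$-conjugation this leaves the diagonal phase $\overline{j(h)(g)}$ on $\ket{h}$, which under the double dual pairing reads $\overline{(g\circ i^{-1})(h)}$ and so matches $Z_{\overline{g\circ i^{-1}}}\ket{h}$; for $Z_\chi$-conjugation the orthogonality condition forces $h' = h\cdot i(\chi)$, delivering $X_{i(\chi)}\ket{h}$ directly.
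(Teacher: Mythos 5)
Your proof is correct and takes the same route the paper gestures at with ``straightforward calculations'': verify each conjugation formula by pushing a basis vector $\ket{h}$ through both sides. The one place where you supply genuinely nontrivial content that the paper glosses over is the derivation that $|\xi(g)|=1$ for any quadratic form $\xi:G\to\CC^\times$ (via $\xi(g^n)=\xi(g)^n\,b_\xi(g,g)^{n(n-1)/2}$ with $n$ the order of $g$, $\xi(1)=1$, and $b_\xi(g,g)$ a root of unity) — this is needed both so that $S_\xi$ is unitary in the first place and so that $\overline{\xi(h)}\,\xi(gh)$ simplifies to $\xi(g)\,b_\xi(g,h)$, and it is worth making explicit exactly as you did. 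The symmetric rewriting of $\FF_i$ via $j=i^{-1}$ and the orthogonality-based collapse of the double sums in part (3) are also clean and correct.
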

\begin{proof}These follow from straightforward calculations.
\end{proof}

In Appendix \ref{generating set appendix} we prove the converse:

\begin{theorem}\label{main-theorem-1} Let $G$ be a finite abelian group. Every Clifford gate on $G$ can be constructed from $\polylog |G|$ gates of the form $A_\tau$, $S_\xi$, and $\FF_i$.
\end{theorem}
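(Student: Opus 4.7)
The plan is to reduce the theorem to a Bruhat-style decomposition for the symplectic group of $V := G \times \Ghat$, equipped with the pairing coming from the commutation relation $Z_\chi X_g = \chi(g) X_g Z_\chi$. Since any Clifford gate conjugates Paulis into Paulis, its action on $\Pauli(G)/U(1) \cong V$ is a symplectic automorphism, giving a homomorphism $\rho: \Cliff(G) \to \Sp(V)$. By part (5) of the proposition in Section \ref{pauli-group}, the kernel of $\rho$ is exactly $\Pauli(G)\cdot U(1)$. Hence, if the image $\rho\bigl(\langle A_\tau, S_\xi, \FF_i \rangle\bigr)$ exhausts $\Sp(V)$ and $\Pauli(G)$ itself lies inside $\langle A_\tau, S_\xi, \FF_i \rangle$, the theorem follows up to a global phase.

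The Pauli containment is easy. Any character $\chi:G\to\CC^\times$ is a (degenerate) quadratic form whose associated bilinear form $b_\chi(g,h)=\chi(gh)/(\chi(g)\chi(h))$ is trivial, and $S_\chi$ then coincides with $Z_\chi$ on the nose. Fixing any isomorphism $i:\Ghat \xrightarrow{\sim} G$, Proposition \ref{clifford-prop} gives $X_g = \FF_i \, Z_{i^{-1}(g)} \, \FF_i^\dagger$, exhibiting every Pauli as a constant-length word in the three generator types.

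The core technical task is to show that any $\sigma \in \Sp(V)$ admits a $\polylog|G|$-length factorization into the symplectic matrices that $A_\tau$, $S_\xi$, and $\FF_i$ induce, namely (in block form relative to $V = G \oplus \Ghat$)
$$\rho(A_\tau) = \begin{pmatrix} \tau & 0 \\ 0 & (\tau^*)^{-1} \end{pmatrix},\quad \rho(S_\xi) = \begin{pmatrix} 1 & 0 \\ B_\xi & 1 \end{pmatrix},\quad \rho(\FF_i) = \begin{pmatrix} 0 & i \\ -(i^*)^{-1} & 0 \end{pmatrix}.$$
The approach is to mimic \cite{hostens2005stabilizer}: apply a Fourier gate to arrange that the upper-right block of $\sigma$ is an isomorphism, use quadratic shears $S_\xi$ on both sides to clear the lower-left block, and finally absorb the remaining diagonal part into an automorphism $A_\tau$ together with one residual shear. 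The symplectic side-conditions on $\sigma$ guarantee at each step that the needed $B_\xi$ is actually symmetric, i.e., arises from a genuine quadratic form.

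The main obstacle will be carrying out this row reduction over a non-cyclic $G$. Over $\ZZ/d$ one can simply invert and subtract; over a general $G = \bigoplus_j \ZZ_{n_j}$ with unequal $n_j$, the blocks are morphisms between non-isomorphic summands, so naive ``division'' is unavailable. The remedy is to perform the reduction one primary invariant factor at a time, realizing each elementary row or column operation between $\ZZ_{n_j}$- and $\ZZ_{n_k}$-components as a single $A_\tau$, $S_\xi$, or Fourier-conjugated shear, while tracking the symplectic constraints that pin down which quadratic forms are admissible. Since $G$ has $O(\log|G|)$ invariant factors and each block can be cleared in $O(\log|G|)$ elementary steps, the total gate count comes out to $\polylog|G|$, as required.
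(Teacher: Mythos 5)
Your setup is exactly right and matches the paper: conjugation gives a homomorphism $\Cliff(G)\to\Sp(G\times\Ghat)$ whose kernel is $\Pauli(G)$ (which already contains all phases as defined), the Pauli group is visibly generated ($S_\chi=Z_\chi$ for characters since their associated bilinear form is trivial, and $X_g=\FF_i Z_{i^{-1}(g)}\FF_i^\dagger$), and it remains to surject onto $\Sp$. The block pictures of $\rho(A_\tau)$, $\rho(S_\xi)$, $\rho(\FF_i)$ are also correct.

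The gap is in the core symplectic step, and it is substantial. Your concrete plan is a Bruhat-style factorization: ``apply a Fourier gate to arrange that the upper-right block of $\sigma$ is an isomorphism,'' then clear with shears. For cyclic $G=\ZZ_d$ this is essentially \cite{hostens2005stabilizer}, but over a non-cyclic $G$ that first step can simply fail: the off-diagonal blocks of $\sigma$ are homomorphisms between direct sums of non-isomorphic cyclic factors, and no single pre- or post-composition by a Fourier gate (which only permutes the $G$ and $\Ghat$ blocks) is guaranteed to turn such a block into an isomorphism. You flag this obstacle yourself, but your proposed remedy --- ``reduce one primary invariant factor at a time, realizing each elementary operation as a single $A_\tau$, $S_\xi$, or Fourier-conjugated shear, while tracking the symplectic constraints'' --- is precisely the thing that needs to be proved, not a one-line fix. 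In particular you would need to (i) invoke Bezout to fold $r_i$ and $r_i'$ into $\gcd(r_i,r_i')$ within one cyclic factor via a shear $S_\xi$, where the existence of a \emph{quadratic} form realizing the required symmetric map goes through the exact sequence $0\to\Ghat\to\mathrm{Quad}(G)\to\mathrm{SymBil}(G)\to 0$; (ii) use an automorphism $A_\tau$ to carry the resulting gcd-column to $(1,0,\dots,0)$, which relies on the first invariant factor $q_0$ having maximal order so that a vector with unit content can be completed to a basis; (iii) use the symplectic compatibility $\beta(\sigma x,\sigma y)=\beta(x,y)$ to pin the surviving diagonal entry to $1$ and then annihilate the first row by the dualized argument; and (iv) induct on the number of invariant factors, which is where the $\polylog|G|$ bound comes from. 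None of these steps is routine over a general $G$, and the paper's proof of the theorem consists precisely of carrying them out. As written, your proposal identifies the right framework but defers the entire technical content to an unproven claim.
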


We now discuss multi-qudit gates, i.e, $\Cliff(G^n)$. The main fact is that there is a decomposition into one-qudit gates and $\CX$ just like before. We define our generalized $\CX$ gate as

\begin{align*}
\CX: \CC[G\times G]&\xrightarrow{}\CC[G\times G].\\
\left| g\right> \left| h \right> & \mapsto \left |g \right> \left| gh \right>
\end{align*}

This gate is visualized as a circuit via

\[
\begin{array}{c}
\Qcircuit @C=1em @R=.7em {
& \ctrl{1} &\qw \\
& \gate{X} & \qw
}
\end{array}
\]

The following proposition holds:

\begin{proposition} Let $G$ be a finite abelian group. The $\CX$ gate is Clifford. Explicitly,

\[
\Qcircuit @C=1em @R=.7em {
   & \ctrl{1} &\gate{X_g}&\qw  & \raisebox{-2.2em}{=}  & & \gate{X_g} &\ctrl{1}&\qw\\
   & \gate{X} & \qw & \qw & & & \gate{X_{g^{-1}}} & \gate{X} & \qw &
}   
\]

\[
\Qcircuit @C=1em @R=.7em {
   & \ctrl{1} &\gate{Z_\chi}&\qw  & \raisebox{-2.2em}{=}  & & \gate{Z_\chi} &\ctrl{1}&\qw\\
   & \gate{X} & \qw & \qw & & & \qw & \gate{X} & \qw &
}   
\]

\[
\Qcircuit @C=1em @R=.7em {
   & \ctrl{1} &\qw &\qw  & \raisebox{-2.2em}{=}  & & \qw &\ctrl{1}&\qw\\
   & \gate{X} & \gate{X_g} & \qw & & & \gate{X_g} & \gate{X} & \qw &
}   
\]

\[
\Qcircuit @C=1em @R=.7em {
   & \ctrl{1} &\qw&\qw  & \raisebox{-2.2em}{=}  & & \gate{Z_\chi} &\ctrl{1}&\qw\\
   & \gate{X} & \gate{Z_\chi} & \qw & & & \gate{Z_\chi} & \gate{X} & \qw &
}   
\]

\end{proposition}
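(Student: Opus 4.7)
The plan is to verify each of the four circuit identities as operator equalities on $\CC[G\times G]$ by direct evaluation on a basis vector $\ket{h}\ket{k}$, and then to conclude that $\CX$ is Clifford from the fact that the operators $X_g\otimes I$, $I\otimes X_g$, $Z_\chi\otimes I$, and $I\otimes Z_\chi$ (as $g$ ranges over $G$ and $\chi$ over $\Ghat$) generate $\Pauli(G\times G)$ modulo global phase. Since each of these generators is sent by conjugation with $\CX$ to an element listed in the right-hand sides, $\CX$ normalizes the Pauli group as required.

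For the first identity I would compute both sides on $\ket{h}\ket{k}$: applying $\CX$ and then $X_g\otimes I$ yields $\ket{gh}\ket{hk}$, whereas applying $X_g\otimes X_{g^{-1}}$ and then $\CX$ yields $\CX\ket{gh}\ket{g^{-1}k}=\ket{gh}\ket{(gh)(g^{-1}k)}$, which agrees with the first expression because $G$ is abelian and so $ghg^{-1}=h$. The second and third identities are essentially immediate: $Z_\chi$ on the control multiplies $\ket{h}\ket{k}$ by $\chi(h)$, which passes through $\CX$ unaffected since $\CX$ does not alter the control register, and $X_g$ on the target commutes with $\CX$ because $hg=gh$. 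For the fourth identity, $Z_\chi$ applied to the target after $\CX$ picks up $\chi(hk)$, which factors as $\chi(h)\chi(k)$; this is precisely the phase produced on the right-hand side by $Z_\chi\otimes Z_\chi$ before $\CX$, using that $\chi$ is a character.

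The proof presents no genuine obstacle; the only things requiring care are orienting the circuit diagrams correctly, so that the LHS operator is the composition $(X_g\otimes I)\circ\CX$ and similarly for the RHS, and remembering that abelianness of $G$ is what makes the target-register arithmetic close up. Once the four commutation rules are verified, the normalizer condition on every generator of $\Pauli(G\times G)$ is immediate, and hence $\CX\in\Cliff(G\times G)$.
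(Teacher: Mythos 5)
Your proposal is correct and takes the same approach the paper does: direct verification of the four conjugation identities on basis vectors $\ket{h}\ket{k}$, followed by the observation that $X_g\otimes I$, $I\otimes X_g$, $Z_\chi\otimes I$, $I\otimes Z_\chi$ generate $\Pauli(G\times G)$ modulo phase, so normalizing these generators suffices. The paper condenses this to ``straightforward calculation,'' and your write-up simply spells out that calculation, including the correct circuit-to-operator orientation and the points where commutativity of $G$ and multiplicativity of $\chi$ are used.
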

\begin{proof} This is a straightforward calculation.
\end{proof}

We now prove our main theorem about multi-qubit gates:

\begin{theorem}\label{main-theorem-2} Every Clifford operation in $\Cliff\left(G^n\right)$ can be decomposed into $\polylog|G|\cdot n$ one-qudit Clifford gates and two-qudit automorphism gates.
\end{theorem}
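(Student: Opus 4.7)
The plan is to first apply Theorem~\ref{main-theorem-1} to the abelian group $G^n$ as a whole, obtaining a decomposition of any $U\in\Cliff(G^n)$ as a product of $\polylog|G^n|$ generators of type $A_\tau$, $S_\xi$, and $\FF_i$ (now parameterized over $G^n$), and then to separately reduce each such multi-qudit generator to one-qudit Cliffords and two-qudit automorphism gates with $\mathrm{poly}(n)$ overhead. Since $\polylog|G^n|=\polylog|G|\cdot\mathrm{poly}(n)$, the final count is of the claimed form.

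For multi-qudit Fourier transforms, fix once and for all an isomorphism $i_0\colon\Ghat\to G$ and observe that under the canonical identification $\widehat{G^n}\cong\Ghat^{\,n}$ the tensor power $i_0^{\otimes n}$ is an isomorphism $\widehat{G^n}\to G^n$ whose associated Fourier transform $\FF_{i_0^{\otimes n}}$ coincides with the product $\FF_{i_0}^{\otimes n}$ of $n$ one-qudit Fouriers. A general $\FF_i$ then factors as $A_\tau\circ\FF_{i_0^{\otimes n}}$ where $\tau=i\circ(i_0^{\otimes n})^{-1}\in\Aut(G^n)$, reducing to the automorphism case.

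For multi-qudit automorphisms $A_\tau$ with $\tau\in\Aut(G^n)$, we view $\tau$ as an $n\times n$ matrix with entries in $\mathrm{End}(G)$ and decompose it by a generalized Gaussian elimination: two-qudit ``shear'' automorphisms of the form $(g,h)\mapsto(g,h\phi(g))$ progressively clear the off-diagonal entries, leaving a block-diagonal matrix whose blocks are one-qudit automorphisms of $G$. This produces $O(n^2)$ elementary factors, each either a one-qudit automorphism or a two-qudit automorphism gate.

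For multi-qudit quadratic phase gates, the structure of quadratic forms on a direct product gives the factorization
$$\xi(g_1,\dots,g_n)=\prod_{i=1}^n \xi_i(g_i)\prod_{1\le i<j\le n} b_{ij}(g_i,g_j),$$
where each $\xi_i\colon G\to\CC^\times$ is a one-qudit quadratic form and each $b_{ij}\colon G\times G\to\CC^\times$ is a bilinear form (obtained from restricting $\xi$ and its polarization $b_\xi$ to the standard coordinate subgroups). The one-qudit parts become one-qudit $S_{\xi_i}$ gates; the bilinear parts become controlled-phase gates $C_b\colon\ket{g}\ket{h}\mapsto b(g,h)\ket{g}\ket{h}$, which are neither one-qudit nor automorphisms and constitute the main obstacle. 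This obstacle is resolved by the clean identity
$$C_b=(I\otimes\FF_i^\dagger)\,A_\tau\,(I\otimes\FF_i),$$
where one writes $b(g,h)=\beta(g)(h)$ for a homomorphism $\beta\colon G\to\Ghat$, fixes an isomorphism $i\colon\Ghat\to G$, and takes $\tau(g,h)=(g,h\cdot(i\circ\beta)(g))$. A direct calculation on the Fourier basis shows that the shear's action on the second qudit translates, after Fourier conjugation, into multiplication by $\beta(g)(h)$, producing $C_b$ exactly with no stray one-qudit corrections. Assembling the three cases, each of the $\polylog|G|\cdot\mathrm{poly}(n)$ generators from Step~1 is expressible as $\mathrm{poly}(n)$ one-qudit Cliffords and two-qudit automorphism gates, giving the advertised bound.
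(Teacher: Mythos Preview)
Your proposal is correct and follows the same overall strategy as the paper: invoke Theorem~\ref{main-theorem-1} for $G^n$ and then break each of the three generator types into one-qudit Cliffords and two-qudit automorphism gates. The treatment of Fourier transforms and of automorphisms (via Gaussian elimination on the $n\times n$ block matrix) matches the paper's essentially verbatim.

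The one place where your argument genuinely differs is the quadratic-phase case. The paper's proof simply asserts that ``quadratic forms are still quadratic forms when restricted to subgroups and thus quadratic form gates can be decomposed into one-qudit Cliffords,'' which glosses over the bilinear cross terms $b_{ij}$: these give two-qudit controlled-phase gates $C_b$, which are not one-qudit and are not automorphism gates. You address this head-on with the identity
\[
C_b=(I\otimes\FF_i^\dagger)\,A_\tau\,(I\otimes\FF_i),\qquad \tau(g,h)=(g,\,h\cdot i(\beta(g))),\quad b(g,h)=\beta(g)(h),
\]
which does exactly what is needed: it realizes each cross-term phase as a two-qudit automorphism conjugated by one-qudit Fouriers. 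This is an honest improvement over the paper's sketch, and your computation checks out (the shear on the Fourier side shifts $\chi\mapsto\chi\beta(g)$, which after inverse Fourier becomes multiplication by $\beta(g)(h)$). So: same route, but you supply a step the paper leaves implicit.
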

\begin{proof} In light of Theorem \ref{main-theorem-1}, we only need to demonstrate the result for the $A_\tau$, $S_\xi$, and $\FF_i$ gates. Choosing an appropriate isomorphism $i$ the Fourier transform on $\CC[G^n]$ is the tensor product of the Fourier transform on components. Thus, the global Fourier transform decomposes into one-qudit Clifford gates because the Fourier transform on each component is itself a Fourier transform, and hence is Clifford by Proposition \ref{clifford-prop}. Similarly, quadratic forms are still quadratic forms when restricted to subgroups and thus quadratic form gates can be decomposed into one-qudit Cliffords as well.

The last case to check is automorphism gates. This amounts to asking whether every automorphisms of $G^n$ can be written as a composition of automorphisms which affect only two of the components. To do this we appeal to the matrix theory of automorphisms, described in detail in \cite{hillar2007automorphisms} and used in Section \ref{generating set appendix} to prove Theorem \ref{main-theorem-1} . Every automorphism can be embedded into a matrix group with modular entries. It is straightforward to show that every invertible matrix can be row-reduced to the identity matrix using Gaussian elimination. Each such row reduction touches at most two rows, and hence can be implemented by an automorphism which touches at most two components. Hence, every automorphism can be decomposed as an iterated composition of automorphisms which touch at most two components so we are done.
\end{proof}

We observe, however, that Proposition \ref{main-theorem-2} does not imply every Clifford operation can be turned into one-qudit gates and $\CX$ gates. In particular, there are automorphism gates which cannot be decomposed into one-qudit automorphisms and $\CX$ automorphisms. This is in sharp contrast to the case when $G=\ZZ_d$ is cyclic, where every Clifford can be generated by one-qudit gates and $\CX$ only. The first example of this failure happens when $G=\ZZ_2\times \ZZ_4$, as shown below:

\begin{proposition}\label{counter-example} Let $G=\ZZ_2\times \ZZ_4$. Consider the subgroup $H$ of $\Aut(G^2)$ generated by automorphisms which act by the identity on $G\times \{0\}$ or $\{0\}\times G$, along with the automorphisms $(g,h)\mapsto (g, g+h)$, $(g,h)\mapsto (g+h,h)$. The subgroup $H$ is a proper subgroup. In particular, the map $((a_0,b_0),(a_1,b_1))\mapsto ((a_0,b_0),(a_0+a_1,b_1))$ is not contained in $H$.
\end{proposition}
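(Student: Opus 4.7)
The plan is to exhibit a two-layer characteristic quotient of $G^2$ on which every generator of $H$ acts symmetrically, and then to show that the target automorphism $\phi$ breaks this symmetry. Let $V = G^2/2G^2$, a $4$-dimensional $\mathbb{F}_2$-space since $G/2G \cong \ZZ_2 \times \ZZ_2$, and let $V_2 = G^2[2]/2G^2 \subseteq V$, the image of the $2$-torsion subgroup. Both $G^2[2]$ and $2G^2$ are characteristic subgroups of $G^2$, so every $\psi \in \Aut(G^2)$ descends to $\bar\psi \in \Aut(V)$ preserving $V_2$. Moreover $V_2 \cong (G[2]/2G)^2$ and $V/V_2 \cong (G/G[2])^2$; both $G[2]/2G$ and $G/G[2]$ are one-dimensional $\mathbb{F}_2$-spaces, hence each canonically isomorphic to $\mathbb{F}_2$, so the product decomposition $G^2 = G \oplus G$ yields canonical identifications $V_2 \cong \mathbb{F}_2^2 \cong V/V_2$.

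The central claim is that for every $\psi \in H$, the two induced endomorphisms $\bar\psi|_{V_2}, \bar\psi|_{V/V_2} \in GL_2(\mathbb{F}_2)$ \emph{coincide} under these identifications. Since this condition is closed under composition and inversion, it suffices to verify it on each generator. Working in coordinates $((a_0, b_0), (a_1, b_1)) \in G^2$, I expect the $\CX$-type generators $(g, h) \mapsto (g, g + h)$ and $(g, h) \mapsto (g + h, h)$ to descend to the same transvection acting in parallel on the $(a_0, a_1)$-plane --- which is $V_2$ --- and on the $(b_0, b_1)$-plane, a section of $V/V_2$. For the one-qudit block-diagonal generators, a preliminary check shows that the reduction $\Aut(G) \to \Aut(\bar G) = GL_2(\mathbb{F}_2)$ lands in the unipotent upper-triangular subgroup of order $2$; its nontrivial element acts as the identity on both the subspace $G[2]/2G$ and the quotient $G/G[2]$, so every one-qudit generator descends to the identity on both $V_2$ and $V/V_2$.

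For the target $\phi$ itself, I will check that its reduction sends $(a_0, b_0, a_1, b_1)$ to $(a_0, b_0, a_0 + a_1, b_1)$, which restricts to the nontrivial transvection $(a_0, a_1) \mapsto (a_0, a_0 + a_1)$ on $V_2$ but acts as the identity on $V/V_2$. Hence $\bar\phi|_{V_2} \neq \bar\phi|_{V/V_2}$, so $\phi \notin H$.

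The main obstacle is identifying the right invariant: the coarser reduction $\Aut(G^2) \to \Aut(V)$ alone does not obstruct $\phi$, since $\bar\phi$ is a single transvection which in $GL_4(\mathbb{F}_2)$ is easily a product of the images of our generators. The crucial point is that the generators of $H$ cannot \emph{decouple} the action on the socle-layer $V_2$ from the action on the head-layer $V/V_2$, whereas $\phi$ decouples them perfectly. Once this invariant is in hand the remaining verification reduces to a short matrix calculation for each generator type.
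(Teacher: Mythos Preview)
Your proposal is correct and takes essentially the same approach as the paper. The paper's one-sentence invariant---that the $\ZZ_2\times\ZZ_2$-projection of $\psi((a,0),(b,0))$ agrees modulo $2$ with the $\ZZ_4\times\ZZ_4$-projection of $\psi((0,a),(0,b))$---is exactly your comparison of $\bar\psi|_{V_2}$ with $\bar\psi|_{V/V_2}$, just phrased in coordinates rather than via the characteristic filtration $2G^2\subset G^2[2]\subset G^2$; your formulation has the advantage of making closure under composition automatic.
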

\begin{proof} All of the automorphisms $\psi\in H$ satisfy the property that the projection of $\psi((a,0),(b,0))$ onto $\ZZ_2\times\ZZ_2\subset G\times G$ is congruent modulo 2 to the projection of  $\psi((0,a),(0,b))$ onto $\ZZ_4\times\ZZ_4\subset G\times G$.The morphism $((a_0,b_0),(a_1,b_1))\mapsto ((a_0,b_0),(a_0+a_1,b_1))$ does not have this property and hence it is not contained in $H$.
\end{proof}

We can now end with the main theorem quantifying the power of Clifford circuits:

\begin{theorem} Every circuit consisting of Clifford gates can be efficiently simulated by classical computers.
\end{theorem}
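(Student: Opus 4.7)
The plan is to generalize the stabilizer tableau method of the classical Gottesman--Knill theorem to arbitrary finite abelian groups. A stabilizer state on $\CC[G^n]$ is a state $\ket{\psi}$ uniquely determined up to phase by the abelian subgroup $S(\ket{\psi})\subseteq \Pauli(G^n)$ of operators fixing it; projecting to $\Pauli(G^n)/U(1)\cong G^n\times\widehat{G^n}$ yields an isotropic subgroup of order $|G|^n$ with respect to the symplectic pairing $\omega((g,\chi),(g',\chi'))=\chi(g')\overline{\chi'(g)}$ coming from point (3) of the first proposition. Since $G^n\times\widehat{G^n}$ has rank $O(n\log|G|)$, any such subgroup is generated by $O(n\log|G|)$ elements, each storable as a tuple $(\omega,g,\chi)$ in $O(n\log|G|)$ bits. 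Starting from a computational basis state, whose stabilizer is generated by scalar multiples of $Z_\chi$ for $\chi$ ranging over a generating set of $\widehat{G^n}$, we thus maintain a polynomial-size classical description at all times.

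Next I would describe the update rules for each elementary Clifford gate. By Theorem \ref{main-theorem-2} every Clifford operation decomposes into $\polylog|G|\cdot n$ one- and two-qudit Clifford gates, and by Theorem \ref{main-theorem-1} the one-qudit factors further decompose into $A_\tau$, $S_\xi$, and $\FF_i$. Proposition \ref{clifford-prop} supplies explicit conjugation formulas $U X_g Z_\chi U^\dagger$ for each of these, while the four displayed $\CX$ identities (with the easy analogues for general two-qudit automorphisms) handle the two-qudit case. To apply a gate, one conjugates each of the $O(n\log|G|)$ stored generators, updating only a bounded number of coordinate entries per generator; the total cost per gate is $\mathrm{poly}(n,\log|G|)$.

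The final ingredient is classical simulation of Pauli, and in particular computational-basis, measurements. Given a Pauli observable $M$, we test via $\omega$ whether $M$ commutes with every generator of $S(\ket{\psi})$. If it does, $M$ lies in the subgroup generated by $S(\ket{\psi})$ together with $U(1)$, so a Smith-normal-form computation over $G^n\times\widehat{G^n}$ expresses $M=\omega\cdot\prod_i S_i^{a_i}$ and reads off the deterministic eigenvalue $\omega$. Otherwise some generator $S_j$ satisfies $\omega(M,S_j)\neq 1$; the outcome is then uniformly distributed over the eigenvalues of $M$ consistent with the stabilizer, and we update the tableau by replacing $S_j$ with an appropriately phased copy of $M$ and multiplying any other anticommuting generator by a power of $S_j$ chosen to restore commutation. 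Sampling a uniform eigenvalue of $M$ requires only knowing the order of $M$ modulo phases, which is directly read off from $(g,\chi)$.

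The main obstacle is phase bookkeeping, which is genuinely more delicate than in the qubit or cyclic qudit case. Under an $S_\xi$ conjugation the operator $X_g$ picks up both a new $Z$-factor $Z_{b_\xi(g,-)}$ and a scalar $\xi(g)$ whose order need not divide $|G|$, and the deterministic-measurement and replacement steps above introduce further phases built from non-homogeneous quadratic forms and character pairings. I would verify that every phase arising in the simulation remains a root of unity whose order divides a fixed bound polynomial in $\exp(G)$, so that phases admit $O(\log|G|)$-bit representations and can be multiplied in polynomial time. Once this closure property is in hand, the three subroutines above compose into an efficient classical simulator of arbitrary Clifford circuits on $\CC[G^n]$.
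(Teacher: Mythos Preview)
Your outline is sound and would yield a correct proof, but it takes a markedly different route from the paper. The paper's argument is a two-line reduction: prior work of Van den Nest and Bermejo-Vega--Van den Nest already establishes efficient classical simulation for circuits built from automorphism, quadratic phase, and Fourier gates (with intermediate measurements), so the theorem follows immediately from Theorem~\ref{main-theorem-1}, which says every Clifford gate decomposes into polylogarithmically many such gates. You instead propose to rebuild the stabilizer-tableau simulator from scratch for general finite abelian $G$: maintain generators of the stabilizer subgroup inside $G^n\times\widehat{G^n}$, update them via the explicit conjugation formulas of Proposition~\ref{clifford-prop}, and handle Pauli measurements by the usual commute/anticommute dichotomy together with a Smith-normal-form solve.

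Both arguments ultimately lean on Theorem~\ref{main-theorem-1} (and you also invoke Theorem~\ref{main-theorem-2}) to reduce to a bounded gate set acting on boundedly many qudits, so you are not bypassing the paper's main technical contribution. What you are doing differently is reproving, in outline, exactly the content of the cited simulation papers rather than quoting them. This buys self-containment and an explicit algorithm, at the cost of the phase-tracking analysis you flag as the delicate step; your claim that all phases remain roots of unity of order bounded by a fixed multiple of $\exp(G)$ is correct (values of quadratic forms on $G$ lie in $\mu_{2\exp(G)}$), so the sketch closes. The paper's approach, by contrast, is shorter and cleanly isolates the new ingredient (surjectivity onto $\Sp(G)$) from the already-known simulation machinery.
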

\begin{proof} It was proved in \cite{nest2012efficient} that every circuit consisting of automorphism gates, quadratic phase gates, and Fourier gates can be efficiently simulated by classical computers. This result was extended to \cite{bermejo2012classical}, which allowed the quantum computer to perform intermediate measurements. Thus, this result is an immediate corollary of Theorem \ref{main-theorem-1}.
\end{proof}

\section{Implementing the $\CX$ gate}
\label{CX gate}

In lattice surgery, the fusion of two chunks of surface code results in a two-qubit Pauli measurement \cite{horsman2012surface}. These are the only native multi-qudit operations available. Hence, one must decompose multi-qudit gates into two-qudit measurements and one-qudit Cliffords. We show in this section that such an algorithm exists for the implementation of the $\CX$ gate. However, in light of Proposition \ref{counter-example} more work needs to be done for finding such circuits for other two-qudit Clifford gates. We give the $\CX$ circuit below, using an ancilla:

\begin{proposition} The following equality holds, up to global phase:

\[
\Qcircuit @C=1em @R=.7em {
   & \ctrl{2} &\qw  & & & & & \qw& \multigate{1}{\mathrm{ZZ}}&\qw & \qw & \gate{Z_{\chi}} & \qw\\
   \lstick{\ket{0}}& \qw & \qw & =& & &\lstick{\ket{0}}& \gate{\FF_i}  & \ghost{\mathrm{ZZ}} &\multigate{1}{\mathrm{XX}}& \gate{\mathrm{Z}} & \gate{X_{q^{-1}}} & \qw\\
   & \gate{X}  & \qw & & & & & \qw &\qw & \ghost{\mathrm{XX}} & \qw & \gate{X_{pq^{-1}}} & \qw\\
}   
\]

where $p$ is the result of the ZZ measurement, $\chi$ is the result of the XX measurement, and $q$ is the result of the Z measurement.

\end{proposition}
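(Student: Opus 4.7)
The plan is to verify the identity on computational basis inputs $\ket{g}_1\otimes\ket{h}_3$ (with the ancilla initialized to the identity element $\ket{\mathbf{0}}_2$ of $G$), and to show that after all measurements and corrections the state on qudits 1 and 3 is $\ket{g}_1\otimes\ket{gh}_3$, up to a phase that depends only on the measurement outcomes $(p,\chi,q)$. Since $\CX$ is determined by its action on the computational basis, this suffices.

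First I would push the state through the circuit. From the formula in Section \ref{clifford-group}, $\FF_i\ket{\mathbf{0}}=\frac{1}{\sqrt{|G|}}\sum_a\ket{a}$, so the state entering the first measurement is $\ket{g}_1\otimes\frac{1}{\sqrt{|G|}}\sum_a\ket{a}_2\otimes\ket{h}_3$. Next I would describe each joint measurement in terms of the generalized Paulis. The $\mathrm{ZZ}$ measurement is the joint measurement of $\{Z_\chi^{(1)}Z_\chi^{(2)}\}_{\chi\in\Ghat}$; its eigenspaces are indexed by $p\in G$ (the product of the two position-basis labels), so outcome $p$ collapses the ancilla to $\ket{g^{-1}p}_2$. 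The $\mathrm{XX}$ measurement is the joint measurement of $\{X_k^{(2)}X_k^{(3)}\}_{k\in G}$; its eigenspaces are indexed by characters $\chi\in\Ghat$ via the projector $\tfrac{1}{|G|}\sum_k\overline{\chi(k)}X_k\otimes X_k$. Applying this projector to $\ket{g^{-1}p}_2\ket{h}_3$ produces an entangled state proportional to $\sum_k\overline{\chi(k)}\ket{kg^{-1}p}_2\ket{kh}_3$, after which the $Z$ measurement on the ancilla with outcome $q$ forces $k=qp^{-1}g$ and leaves the target in $\ket{qp^{-1}gh}_3$ with residual scalar $\overline{\chi(qp^{-1}g)}$.

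To finish, I would verify the net effect of the three corrections. Applying $X_{q^{-1}}$ to the ancilla restores it to $\ket{\mathbf{0}}$; applying $X_{pq^{-1}}$ to the target rewrites $\ket{qp^{-1}gh}$ as $\ket{gh}$ since $pq^{-1}\cdot qp^{-1}gh=gh$; and applying $Z_\chi$ to the control contributes a factor $\chi(g)$. The cumulative phase $\overline{\chi(qp^{-1}g)}\cdot\chi(g)=\overline{\chi(qp^{-1})}\cdot\overline{\chi(g)}\cdot\chi(g)=\overline{\chi(qp^{-1})}$ depends only on the measurement outcomes, matching the ``up to global phase'' qualifier in the statement.

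The main obstacle is not the calculation itself (which is essentially forced by the definitions of $X_k$, $Z_\chi$, and $\FF_i$) but fixing the inversion and labeling conventions for the joint measurements, particularly the character-labeling of the $\mathrm{XX}$ outcome. The $Z_\chi$ correction on qudit 1 is arranged precisely so that its phase $\chi(g)$ cancels the $g$-dependent piece $\overline{\chi(g)}$ of the residual scalar produced by the $\mathrm{XX}$ projector; with any other convention the cumulative phase would carry an input-dependent factor like $\chi(g)$ or $\chi(g^2)$ and the protocol would not implement $\CX$. Getting this bookkeeping right is where I would expect to spend most of the care.
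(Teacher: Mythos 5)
Your proposal is correct and follows essentially the same calculation as the paper: push the basis state $\ket{g}\ket{\mathbf{0}}\ket{h}$ through the Fourier transform, the ZZ, XX, and Z measurements, and the three corrections, and observe that the residual phase $\overline{\chi(qp^{-1})}$ is independent of $g$ and $h$. The only difference is that you spell out the joint-Pauli-measurement projectors explicitly, which the paper leaves implicit.
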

\begin{proof} We show that $\ket{g}\ket{0}\ket{h}$ sends to $\ket{g}\ket{0}\ket{gh}$, up to a global phase independent of $g$ and $h$. After applying a Fourier transform, the state is $|G|^{-1/2}\sum_{k}\ket{g}\ket{k}\ket{h}$. Measuring $p$ under ZZ means we project orthogonally onto the subspace with $p=gk$. This results in the state $\ket{g}\ket{pg^{-1}}\ket{h}$. Applying an XX measurement gives the state

$$\frac{1}{\sqrt{|G|}}\sum_{k\in G}\overline{\chi(k)}\ket{g}\ket{pg^{-1}k}\ket{hk}.$$

Measuring Z projects onto the subspace $q=pg^{-1}k$, yielding the state $\overline{\chi(qp^{-1}g)}\ket{g}\ket{q}\ket{hqp^{-1}g}$. Applying the final corrections yields the state $\overline{\chi(qp^{-1})}\ket{g}\ket{0}\ket{hg}$. Since the phase $\overline{\chi(qp^{-1})}$ is independent of $g$ and $h$, we have proved the desired result.

\end{proof}

\section{Magic states}
\label{magic-states}

In this section we consider the prospect of achieving universal quantum computing using magic states. Given access to special magic states, certain non-Clifford gates can be implemented with Cliffords. The main circuit of magic state theory is shown below:

\begin{proposition}\label{magic-states} Let $\xi:G\to U(1)$ be any function. Let $S_\xi:\CC[G]\to\CC[G]$ be the operator $S_\xi\ket{g}=\xi(g)\ket{g}$. The following equality holds up to phase:

\[
\Qcircuit @C=1em @R=.7em {
   & \ctrl{1}&\qw & \gate{S_{\overline{b_{\xi}(k,-)}}}& \qw  & \raisebox{-2.2em}{=}  & & \gate{S_{\xi}} &\qw\\
   \lstick{\frac{1}{\sqrt{|G|}}\sum_{g\in G}\xi(g)\ket{g}}& \gate{X^{\dagger}} & \meter & \qw & \qw & & & \qw & \qw &
}   
\]

where $k$ is the result of the measurement on the bottom qubit, and $b_\xi(k,g)=\xi(kg)/(\xi(k)\xi(g))$.
\end{proposition}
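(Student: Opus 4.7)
The plan is to evolve a generic input state through the circuit on the left and match the output to $S_\xi$ applied to that input, up to a global phase. Writing the data qudit as $\ket{\psi}=\sum_{h\in G}c_h\ket{h}$, I would first tensor with the magic state and apply the controlled-$X^\dagger$ gate, which in the group-theoretic convention of the paper acts as $\ket{h}\ket{g}\mapsto\ket{h}\ket{h^{-1}g}$. Reindexing the second register by $g\mapsto hg$ cleanly collects the magic-state phase into a single factor $\xi(hg)$ entangled with the data register, producing $|G|^{-1/2}\sum_{h,g} c_h\,\xi(hg)\,\ket{h}\ket{g}$.

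Next I would condition on the ancilla measurement outcome $k\in G$, leaving the unnormalized data state $\sum_h c_h\,\xi(hk)\,\ket{h}$. The key algebraic step is to invoke the definition of $b_\xi$ to factor $\xi(hk)=\xi(k)\xi(h)\,b_\xi(k,h)$, which separates the intended phase $\xi(h)$ from a $\psi$-independent scalar $\xi(k)$ and a two-variable error $b_\xi(k,h)$ that precisely matches the correction gate. Applying $S_{\overline{b_\xi(k,-)}}$ multiplies $\ket{h}$ by $\overline{b_\xi(k,h)}$, and since $\xi$ takes values in $U(1)$ the product $b_\xi(k,h)\,\overline{b_\xi(k,h)}$ equals $1$, so the data state collapses to $\xi(k)\,S_\xi\ket{\psi}$.

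The main subtlety, rather than obstacle, is remembering that the proposition allows an arbitrary $\xi\colon G\to U(1)$ and not merely a quadratic form; thus $b_\xi(k,-)$ need not be a character, and one cannot appeal to Pauli-like commutation identities such as those in Proposition \ref{clifford-prop}(2). The argument therefore has to be carried out directly at the level of the defining cocycle identity for $b_\xi$. Everything else is bookkeeping through four circuit stages, and it suffices to verify that the residual scalar $\xi(k)$ depends only on the measurement record and not on $h$ or on the coefficients $c_h$ in order to conclude the equality up to global phase.
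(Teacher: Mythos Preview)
Your proposal is correct and follows essentially the same approach as the paper's own proof: apply $\CX^\dagger$ to the data--magic pair, reindex the ancilla register, project onto the measurement outcome $k$, factor $\xi(hk)=\xi(k)\xi(h)b_\xi(k,h)$, and cancel $b_\xi(k,h)$ with the correction gate to leave the global phase $\xi(k)$. The only cosmetic difference is that the paper checks the identity on a single basis vector $\ket{g}$ and implicitly extends by linearity, whereas you carry a general superposition $\sum_h c_h\ket{h}$ through the computation; the algebra is identical.
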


\begin{proof} Letting the top wire input the state $\ket{g}$, the state after applying $\CX^\dagger$ is

$$\frac{1}{\sqrt{|G|}}\sum_{h\in G}\xi(h)\ket{g}\ket{g^{-1}h}.$$

Measuring $k=g^{-1}h$ yields the state $\xi(gk)\ket{g}\ket{k}$. Applying $S_{\overline{b_\xi(k,-)}}$ thus yields $\xi(g)\xi(k)\ket{g}\ket{k}$. Since $\xi(k)$ is a global phase, we thus get the desired result.
\end{proof}

In particular, suppose that $\xi:G\to U(1)$ is a function such that $b_\xi(k,-)$ is a quadratic form for every $k\in G$. Then, with a supply of magic states of the form $\frac{1}{\sqrt{|G|}}\sum_{g\in G}\xi(g)\ket{g}$. One can use the circuit of Proposition \ref{magic-states} to perform the gate $S_\xi$ using only Clifford gates. It is a great result that this is enough for universal quantum computing:

\begin{proposition} If $\xi:G\to U(1)$ is a function such that $b_\xi(k,-)$ is a quadratic form for every $k\in G$ and $\xi$ is itself not a quadratic form, then Clifford gates along with single-qubit $S_\xi$ gates are enough for universal quantum computing. 
\end{proposition}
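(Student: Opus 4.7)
The plan is to establish that $\overline{\langle \Cliff(G^n), S_\xi\rangle} = PU(|G|^n)$ for every $n$, from which universal quantum computation follows via the Solovay--Kitaev theorem, giving $\polylog(1/\epsilon)$-efficient approximation of arbitrary unitaries. I would decompose the density statement into three sub-claims: (i) $S_\xi \notin \Cliff(G)$, (ii) the topological closure of $\langle \Cliff(G^n), S_\xi\rangle$ has positive dimension as a Lie group, and (iii) any positive-dimensional closed subgroup of $PU(|G|^n)$ normalized by $\Cliff(G^n)$ equals the whole group.

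Sub-claim (i) is the converse of Proposition~\ref{clifford-prop}(2): a diagonal unitary $\diag(c_g)_g$ lies in $\Cliff(G)$ if and only if $g \mapsto c_g/c_1$ is a quadratic form. The forward direction is proved there, and the reverse follows from the same conjugation calculation: $\diag(c_g) X_h \diag(c_g)^\dagger \ket{k} = (c_{hk}/c_k)\ket{hk}$, which is a scalar multiple of a Pauli if and only if $c_{hk}/c_k$ factors as a constant times a character in $k$, precisely the quadratic-form condition on $\xi(g) = c_g/c_1$. The assumption that $\xi$ is not quadratic then gives $S_\xi \notin \Cliff(G)$. Sub-claim (ii) follows from (i) together with the theorem that the Clifford group is a maximal finite subgroup of $PU(|G|^n)$; any proper extension is then infinite, and by Cartan's closed subgroup theorem its closure inside the compact Lie group $PU(|G|^n)$ is necessarily positive-dimensional. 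This maximality result is classical for qubits (Nebe--Rains--Sloane) and has been extended to cyclic qudits in the stabilizer-theory literature; its extension to arbitrary finite abelian $G$ follows by the same representation-theoretic method.

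The main obstacle is sub-claim (iii), the density step. My approach is representation-theoretic. Let $\mathcal{G}$ denote such a subgroup and $\mathfrak{g}$ its Lie algebra. The complexification $\mathfrak{g}^{\CC}$ sits inside $\mathfrak{sl}(|G|^n,\CC)$ as a subspace invariant under the adjoint action of $\Pauli(G^n)$, and this ambient space decomposes under Pauli conjugation into $|G|^{2n}-1$ one-dimensional weight spaces indexed by the non-identity Paulis. Invariance therefore expresses $\mathfrak{g}^{\CC}$ as a direct sum of such weight spaces; the $\Cliff(G^n)$-action permutes these summands according to the symplectic action on $G^n \times \widehat{G^n}$; and the Lie bracket $[X_g Z_\chi, X_h Z_\eta] = (\chi(h)-\eta(g)) X_{gh} Z_{\chi\eta}$ enforces closure of the underlying weight set under addition whenever the source Paulis fail to commute. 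By a version of Witt's theorem for the symplectic module $G^n \times \widehat{G^n}$, the $\Cliff(G^n)$-orbit of any nonzero weight together with this bracket-closure should exhaust every non-identity weight, yielding $\mathfrak{g}^{\CC} = \mathfrak{sl}(|G|^n,\CC)$ and hence $\mathcal{G} = PU(|G|^n)$. The Witt-style step is what I expect to be the delicate part, since for non-cyclic $G$ the symplectic module has a richer invariant-factor decomposition than the usual vector-space situation (as illustrated by the $\ZZ_4$-phenomenon in Proposition~\ref{counter-example}), and a careful case analysis will be required to check that symplectic orbits combined with bracket-closure reach every weight.
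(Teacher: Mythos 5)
Your approach is genuinely different from the paper's and considerably more ambitious, but it has two unrepaired gaps, either of which is fatal on its own.

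The paper proves this in three lines: since $\xi$ is not a quadratic form on $G$, there is a cyclic subgroup $\ZZ_n \leq G$ on which $\xi$ restricted is still non-quadratic; the hypothesis that $b_\xi(k,-)$ is quadratic for all $k$ survives restriction, so $\xi|_{\ZZ_n}$ sits at the third level of the $\ZZ_n$ Clifford hierarchy; one then invokes the existing universality results for cyclic qudits. The whole point of this reduction is precisely to avoid having to prove the two things you have left open. Your sub-claim (i) is fine --- the characterization of diagonal Clifford gates as exactly the quadratic-form phase gates is a correct and short conjugation calculation, and it is the right way to see $S_\xi \notin \Cliff(G)$.

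The first gap is sub-claim (ii). You invoke maximality of the Clifford group as a finite subgroup of $PU(|G|^n)$ and assert it ``follows by the same representation-theoretic method'' as Nebe--Rains--Sloane. That theorem is genuinely nontrivial already for qubits, its known extensions cover odd-prime qudits, and I am not aware of any version in the literature for composite cyclic $d$, let alone for a general finite abelian $G$. The delicacy is not cosmetic: the structure of $\Sp(G\times\Ghat)$ for non-cyclic $G$ differs qualitatively from the symplectic groups over fields that underlie the existing maximality proofs (compare Proposition~\ref{counter-example}, where already $\ZZ_2\times\ZZ_4$ exhibits behavior with no vector-space analogue). Citing this as a routine extension is the kind of claim that would need its own paper. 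Moreover, you do not actually need maximality to conclude infiniteness: it suffices to exhibit a single non-Clifford gate whose Clifford conjugates and products generate an infinite set, and for a diagonal gate with a non-root-of-unity eigenvalue ratio this can be done by hand. So even granting that the gap could be filled, maximality is a disproportionately heavy tool for the job.

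The second gap you have already identified yourself: the ``Witt-style'' transitivity step in (iii). You are correct that the weight-space decomposition under Pauli conjugation is the right picture and that the bracket $[X_g Z_\chi, X_h Z_\eta] = (\chi(h)-\eta(g))X_{gh}Z_{\chi\eta}$ closes the invariant Lie subalgebra under addition of non-orthogonal weights. But the claim that the $\Sp(G)$-orbit of any single nonzero weight, together with bracket closure, exhausts all nonzero weights is precisely where Witt's theorem would be invoked over a field, and it fails in the naive form over $G\times\Ghat$ when $G$ is non-cyclic: elements of different orders in $G\times\Ghat$ cannot lie in the same $\Sp(G)$-orbit, so you must show the bracket bridges between order strata. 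This is plausible but unproven, and you explicitly flag that you have not done the case analysis. Until that is done, the argument does not close.

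In short: what you buy with your route is a stronger statement (any non-Clifford diagonal gate would give universality, not just the hierarchy-level-three ones) and an argument that does not defer to qudit literature; what it costs is two open problems, one of which (maximality for general abelian $G$) appears to be genuinely new mathematics. The paper's reduction-to-a-cyclic-subgroup trick sidesteps both. If you want to pursue your route, the honest framing is that sub-claims (ii) and (iii) are conjectures for general $G$; alternatively, you could graft the paper's cyclic-subgroup reduction onto sub-claim (i) and discard (ii) and (iii) entirely.
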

\begin{proof} Since $\xi:G\to U(1)$ is not a quadratic form on $G$, then there must exist a cyclic subgroup of $G$ such that $\xi$ is not a quadratic form on that subgroup. Thus, we can reduce to the case that $G=\ZZ_n$. This result now follows from well-known results about qubits \cite{wang2020qudits}.
\end{proof}

The existence of suitable $\xi$ is immediate. For cyclic groups one can explicitly write $\xi$ as being an exponential with a cubic exponent, and for more general finite abelian groups one can take products of such $\xi$. A good discussion of gates of this form as well as gates higher in the so-called ``Clifford hierarchy" for qudits is found in \cite{cui2017diagonal, rengaswamy2019unifying}. The states needed to apply the circuit in Proposition \ref{magic-states} can be created using the process of magic state distillation \cite{bravyi2005universal, litinski2019magic}.

\section{Fourier transform on split subgroups}
\label{fourier-transform}

One useful circuit in $\Cliff(G)$ is the implementation of Fourier transforms on split subgroups of $G$. That is, a circuit implementing the Fourier transform on the $H$ term of some splitting $G=H\times K$, using only $G$-Clifford gates.

To describe our circuit, let $\xi: G\to\CC^\times$ be a non-degenerate quadratic form. Given $\chi\in \Ghat$ let $i_\xi(\chi)$ be the unique element such that

$$\frac{\xi(g)\xi(i_\xi(\chi))}{\xi(gi_\xi(\chi))}=\chi(g),\,\, \forall g\in G.$$

\begin{theorem}\label{fourier-decomposition} Let $\xi:G\to\CC^\times$ be a non-degenerate quadratic form.

\begin{enumerate}
\item $(F_{i_\xi}S_\xi)^3= I$ up to global phase

\item $F_{i_\xi}^2=A_{(\--)^{-1}}$, where $A_{(\--)^{-1}}$ is the automorphism gate associated to the inversion automorphism, $g\mapsto g^{-1}$.

\item Let $H$ be another finite group, and let $i: \widehat{H}\xrightarrow{\sim}H$ be any isomorphism. Letting the top wire denote the Hilbert space $\CC[G]$ and letting the bottom wire denote $\CC[H]$, we have that

\[
\begin{array}{c}
\Qcircuit @C=1em @R=.7em {
& \gate{F_{\overline{i_\xi}}} &\qw &\raisebox{-2.2em}{=} && \gate{S_\xi} &\multigate{1}{\FF_{i_\xi \times i}} & \gate{S_\xi} & \multigate{1}{\FF_{i_\xi \times i}} & \gate{S_\xi} & \qw\\
&\qw &\qw & & & \qw  & \ghost{\FF_{i_\xi \times i}} & \qw  & \ghost{\FF_{i_\xi \times i}} & \gate{A_{(\--)^{-1}}} & \qw
}
\end{array}
\]

\end{enumerate}
\end{theorem}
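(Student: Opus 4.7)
My plan is to establish (1) and (2) by direct computation in $\CC[G]$, and then assemble (3) from them via a tensor decomposition of $\FF_{i_\xi\times i}$. The algebraic backbone throughout is the symmetric bicharacter $b_\xi(g,h)=\xi(gh)/(\xi(g)\xi(h))$ together with the defining relation $\chi(g)=b_\xi(g,i_\xi(\chi))^{-1}$ for $i_\xi$.

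For (2) I would expand $F_{i_\xi}^2\ket{g}$ as a double sum over $\chi,\psi\in\Ghat$. The defining relation combined with the symmetry of $b_\xi$ yields the reciprocity $\psi(i_\xi(\chi))=\chi(i_\xi(\psi))$, which in the exponent swaps the roles of $\chi$ and $\psi$. Character orthogonality in $\chi$ then collapses the sum to a delta at $g\cdot i_\xi(\psi)=1$, leaving $F_{i_\xi}^2\ket{g}=\ket{g^{-1}}$. I would also record the byproduct $F_{i_\xi}^{-1}=A_{(\--)^{-1}}F_{i_\xi}=F_{\overline{i_\xi}}$, which follows by the substitution $\chi\mapsto\chi^{-1}$ and is needed later.

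For (1) I would first show that $U:=F_{i_\xi}S_\xi$ simplifies, after the substitution $h=i_\xi(\chi)$, to $U\ket{g}=\tfrac{\xi(g)}{\sqrt{|G|}}\sum_h b_\xi(g,h)\ket{h}$. Iterating produces a triple sum in $h,k,l$; using $b_\xi(g,h)b_\xi(h,k)=b_\xi(gk,h)$ (a consequence of symmetry and bilinearity), the inner $h$-sum rewrites as the Gauss sum $\sum_h\xi(h)b_\xi(gk,h)$. I would evaluate this by completing the square: $\xi(h)\chi(h)=\xi(ht)/\xi(t)$ with $t=i_\xi(\chi^{-1})$ gives $\sum_h\xi(h)\chi(h)=G_\xi/\xi(i_\xi(\chi^{-1}))$, where $G_\xi:=\sum_h\xi(h)$. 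After $\xi$ cancellations the remaining $k$-sum is $\sum_k b_\xi(k,lg^{-1})$, which collapses by non-degeneracy of $b_\xi$ to $|G|\delta_{l,g}$. This leaves $U^3\ket{g}=(G_\xi/\sqrt{|G|})\ket{g}$, and $|G_\xi|^2=|G|$ (the standard Gauss sum identity for non-degenerate quadratic forms) ensures the prefactor is a global phase.

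For (3), the identity $\FF_{i_\xi\times i}=F_{i_\xi}\otimes F_i$ (immediate from $\widehat{G\times H}=\Ghat\times\widehat{H}$ and the product definition of $i_\xi\times i$) factors the right-hand circuit as a tensor product of operations on $\CC[G]$ and on $\CC[H]$. On the top wire the composite $S_\xi F_{i_\xi}S_\xi F_{i_\xi}S_\xi$ equals, by part (1), a phase times $F_{i_\xi}^{-1}$, which is $F_{\overline{i_\xi}}$ by the byproduct of (2). On the bottom wire the composite is $A_{(\--)^{-1}}F_i^2$, so the desired equality reduces to showing $F_i^2=A_{(\--)^{-1}}$ on $\CC[H]$. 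The main obstacle is that the argument of (2) only closes when $i$ is self-dual with respect to the Pontryagin pairing $(\chi,\psi)\mapsto\psi(i(\chi))$; an arbitrary isomorphism $\widehat{H}\to H$ need not be. I therefore expect the theorem to be most naturally read as restricting $i$ to such self-dual isomorphisms (for instance, $i=i_\eta$ for a non-degenerate quadratic form $\eta$ on $H$), under which assumption the same reciprocity argument as in (2) closes (3).
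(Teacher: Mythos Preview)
Your treatment of (1) and (2) is essentially the paper's own. For (1) the paper applies $S_\xi$ and $F_{i_\xi}$ one at a time, invoking the defining relation $\xi(g)\xi(i_\xi(\chi))\overline{\chi(g)}=\xi(g\,i_\xi(\chi))$ and the substitution $h=g\,i_\xi(\chi)$ twice in succession to peel off the Gauss sum $|G|^{-1/2}\sum_h\xi(h)$; this is your computation with only cosmetic differences in bookkeeping. For (2) the paper says only that it is ``obvious,'' and your reciprocity-plus-orthogonality argument is the natural unpacking.

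For (3) you have noticed something the paper glosses over. The paper writes merely that ``part (3) immediately follows from the first two,'' implicitly via the tensor factorisation $\FF_{i_\xi\times i}=F_{i_\xi}\otimes F_i$ that you describe. On the bottom wire that reduction requires $F_i^2=A_{(-)^{-1}}$, which holds precisely when $i$ is self-dual in the sense $\chi(i(\psi))=\psi(i(\chi))$; for a general isomorphism one only gets $F_i^2=A_\tau$ for some automorphism $\tau$ of $H$ that need not be inversion. The paper itself remarks earlier that the self-duality condition ``a-priori does not'' hold, so the hypothesis ``any isomorphism $i$'' in (3) is slightly too generous as literally stated. Your proposed restriction to $i=i_\eta$ for a non-degenerate quadratic form $\eta$ on $H$ is the clean fix, and it is harmless for the downstream uses in the paper, where (3) is only invoked to manufacture a partial Fourier transform on one factor within the generating set $\{A_\tau,S_\xi,\FF_i\}$.
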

\begin{proof} We begin by proving part (1). Choose $g\in G$. We compute

\begin{align*}
\left(\FF_{i_\xi}S_\xi\right)^3\ket{g}&=\frac{\FF_{i_\xi}S_\xi\FF_{i_\xi}}{\sqrt{|G|}}\sum_{\chi\in \Ghat}\xi(i_\xi(\chi))\xi(g)\overline{\chi(g)}\ket{i_\xi(\chi)}.
\end{align*}

By the defining formula of $i_\xi(\chi)$ we get that $\xi(i_\xi(\chi))\xi(g)\overline{\chi(g)}=\xi(gi_\xi(\chi))$. Hence, changing variables via $h=gi_{\xi}(\chi)$ we arrive at

\begin{align*}
\left(\FF_{i_\xi}S_\xi\right)^3\ket{g}&=\frac{\FF_{i_\xi}S_\xi\FF_{i_\xi}}{\sqrt{|G|}}\sum_{h\in G}\xi(h)\ket{g^{-1} h}.
\end{align*}

Expanding further, we obtain

\begin{align*}
\left(\FF_{i_\xi}S_\xi\right)^3\ket{g}&=\frac{\FF_{i_\xi}}{|G|}\sum_{\chi\in\Ghat}\sum_{h\in G}\xi(h)\xi(i_\xi(\chi))\overline{\chi(h)}\chi(g)\ket{i_\xi(\chi)}.
\end{align*}

Applying the defining relation $\xi(h)\xi(i_\xi(\chi))\overline{\chi(h)}=\xi(hi_\xi(\chi))$ and changing variables via $h\mapsto h \cdot \overline{i_{\xi}(\chi)}$ gives

\begin{align*}
\left(\FF_{i_\xi}S_\xi\right)^3\ket{g}&=\left(\frac{1}{\sqrt{|G|}}\sum_{h\in G}\xi(h)\right)\cdot\frac{\FF_{i_\xi}}{\sqrt{|G|}}\sum_{\chi\in\Ghat}\chi(g)\ket{i_\xi(\chi)}\\
&=\left(\frac{1}{\sqrt{|G|}}\sum_{h\in G}\xi(h)\right)\ket{g}.
\end{align*}

Since $\ket{g}$ was chosen arbitrarily and the phase has no dependence on $g$, this proves our result. Part (2) is obvious, and part (3) immediately follows from the first two.
\end{proof}

The equations $(FS)^3=I$ and $F^2=A_{(\--)^{-1}}$ are shadows of a deeper mathematical truth: topological quantum field theories all have associated modular representations \cite{wang2010topological}. That is, every topological quantum field theory has an associated representation of the modular group $\mathrm{SL}_2(\ZZ)$. By the canonical presentation for $\mathrm{SL}_2(\ZZ)$, this means that gives a pair of matrices $(S,T)$ such that $(ST)^3=I$ and $T^2=I$ up to phase. Every finite group has an associated topological quantum field theory. It turns out that $S_\xi$ is essentially to $T$-matrix, and $F_{i_\xi}$ is a half-twisted version of the $S$-matrix. The half twist is neccecary to make so $T^2=I$ holds, as opposed to simply $T^2=A_{(-)^{-1}}$.

\section{The generating set of Clifford gates}
\label{generating set appendix}

The aim of this section is to prove Theorem \ref{main-theorem-1}, which asserts that every $G$-Clifford gate can be decomposed into Fourier, quadratic phase, and automorphism gates. The main idea of the proof is as follows. Every Clifford operator $U$ induces a map $\Pauli(G)/U(1)\xrightarrow{}\Pauli(G)/U(1)$ by  sending $P$ to $UPU^{\dagger}$. Seeing as $\Pauli(G)/U(1)\cong G\times \Ghat$, we get a natural map $\Cliff(G)\to\Aut(G\times \Ghat)$. To understand $\Cliff(G)$ we study the properties of this map. Our main tool will be the symplectic group on $G$:

$$\Sp(G)=\{\sigma\in\Aut(G\times\Ghat)\st \beta(\sigma(x),\sigma(y))=\beta(x,y)\},$$

where $\beta$ is the bilinear form

\begin{align*}
\beta:(G\times \Ghat)\times (G\times \Ghat)&\to \CC^\times\\
(g_0,\chi_0)\times (g_1,\chi_1)&\mapsto \chi_0(g_1)\overline{\chi_1(g_0)}.
\end{align*}

Our first step towards proving Theorem \ref{main-theorem-1} is the following:

\begin{proposition}\label{short-exact} The image of the natural map $\Cliff(G)\xrightarrow{}\Aut(G\times \Ghat)$ is contained in $\Sp(G)$, and the sequence

$$0\to \Pauli(G)\to \Cliff(G)\to\Sp(G)$$

is exact.
\end{proposition}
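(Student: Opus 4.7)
The plan is to verify three things in sequence: (a) the conjugation action of $\Cliff(G)$ on $\Pauli(G)/U(1)$ really does give a well-defined homomorphism into $\Aut(G\times \Ghat)$, (b) the image lands inside $\Sp(G)$, and (c) the kernel is exactly the image of $\Pauli(G)$.

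For (a), I would start by noting that by definition a Clifford $U$ normalizes $\Pauli(G)$, so conjugation by $U$ is a group automorphism of $\Pauli(G)$. Since the phases $U(1)\subset \Pauli(G)$ are central, conjugation fixes them pointwise and descends to an automorphism of the quotient $\Pauli(G)/U(1)$. Using the parametrization $\omega X_g Z_\chi$ of Paulis, this quotient is canonically identified with $G\times \Ghat$, and the assignment $U\mapsto (\text{conjugation action})$ is manifestly a group homomorphism into $\Aut(G\times \Ghat)$.

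For (b), the key input is the commutation relation $Z_\chi X_g = \chi(g) X_g Z_\chi$ from the Pauli proposition. A direct calculation gives
\begin{equation*}
(X_{g_1}Z_{\chi_1})(X_{g_2}Z_{\chi_2})(X_{g_1}Z_{\chi_1})^{-1}(X_{g_2}Z_{\chi_2})^{-1} = \chi_1(g_2)\overline{\chi_2(g_1)}\cdot I = \beta\bigl((g_1,\chi_1),(g_2,\chi_2)\bigr)\cdot I.
\end{equation*}
In other words, $\beta$ is intrinsically the scalar commutator on $\Pauli(G)/U(1)$. Since conjugation by any unitary preserves commutators, the automorphism of $G\times\Ghat$ induced by a Clifford must preserve $\beta$, which is exactly the condition for lying in $\Sp(G)$.

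For (c), the kernel of the map consists of Cliffords $U$ such that $U P U^\dagger$ and $P$ differ by a phase for every $P\in \Pauli(G)$; this is precisely the hypothesis of point (5) of the earlier proposition, which concludes that $U$ is itself a Pauli operator. Conversely, every Pauli lies in $\Cliff(G)$ (Paulis normalize themselves, with a scalar commutator) and acts trivially on the quotient $\Pauli(G)/U(1)$, so $\Pauli(G)$ is contained in the kernel. Injectivity of $\Pauli(G)\hookrightarrow \Cliff(G)$ is tautological, which establishes exactness at all three terms.

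There is no real obstacle here: all of the content is already packaged in the earlier Pauli proposition, in particular point (5) for the kernel computation. The only book-keeping to be careful with is the sign/conjugation convention in $\beta$, i.e.\ matching $\chi_1(g_2)\overline{\chi_2(g_1)}$ from the commutator with the stated definition of $\beta$; this is what forces the asymmetric form of $\beta$ rather than the more naive $\chi_1(g_2)\chi_2(g_1)$.
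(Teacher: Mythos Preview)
Your proposal is correct and follows essentially the same approach as the paper: both identify $\beta$ as the scalar commutator on $\Pauli(G)$, observe that conjugation preserves commutators to get the $\Sp(G)$ containment, and invoke point~(5) of the Pauli proposition for the kernel. If anything you are slightly more careful, making explicit the well-definedness step~(a) and the reverse inclusion $\Pauli(G)\subset\ker$ in~(c), which the paper leaves implicit.
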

\begin{proof} To begin, we observe that for every pair of pairs $(g_0,\chi_0), (g_1,\chi_1)\in G\times \Ghat$ we have that

$$(X_{g_0}Z_{\chi_0})(X_{g_1}Z_{\chi_1})=\chi_0(g_1)\overline{\chi_1(g_0)}(X_{g_1}Z_{\chi_1})(X_{g_0}Z_{\chi_0}).$$

Thus, we can re-cast $\beta((g_0,\chi_0),(g_1,\chi_1))$ as the relative phase that appears upon commuting $X_{g_0}Z_{\chi_0}$ and $X_{g_1}Z_{\chi_1}$. Conjugating by $U\in\Cliff(G)$ must preserve this relative phase. Thus, the image of $U$ in $\Aut(G\times \Ghat)$ must preserve $\beta(x,y)$, and thus the image of $U$ must lie in $\Sp(G)$.

The kernel of the map $\Cliff(G)\to \Sp(G)$ consists of all matrices which act trivially on $\Pauli(G)/U(1)$. That is, matrices which commute with $\Pauli(G)$ up to phase. By Proposition \ref{clifford-prop}, these are exactly the Pauli matrices. Thus, our proof is complete.
\end{proof}

The proof strategy is now as follows. Representing automorphisms as matrices, we will go through an explicit set of row-reduction operations to show that every matrix in $\Sp(G)$ can be decomposed into the matrices $A_\tau$, $S_\xi$, $\FF_i$. This wll show that the subgroup generated by the $A_\tau$, $S_\xi$, $\FF_i$ surjects onto $\Sp(G)$. Finally, from Proposition \ref{short-exact} we can conclude the result. Before proceeding, we introduce the group of symmetric bilinear forms,

$$\mathrm{SymBil}(G)=\{b:G\times G\to \CC^\times \st b(x,y)=b(y,x)\}.$$

Every symmetric bilinear form induces a map $\tilde{b}:G\to \Ghat$ sending $g$ to $b(g,-)$. This map satisfies $\tilde{b}(x)(y)=\tilde{b}(y)(x)$. Similarly, every symmetric homomorphism $G\to\Ghat$ induces a symmetric bilinear form. We will conflate symmetric bilinear forms and their induced homomorphisms $G\to\Ghat$. It is immediate to verify that any homomorphism $A\times B\to \widehat{A}\times \widehat{B}$ which restricts to symmetric bilinear forms $A\to \widehat{A}$ and $B\to\widehat{B}$ must be itself a symmetric bilinear form. This gives a very efficient method for constructing symmetric bilinear forms on a product of cyclic groups, since every map between cyclic groups is symmetric.

Additionally, we have a map $\mathrm{Quad}(G)\to\mathrm{SymBil}(G)$ defined by sending $\xi:G\to \CC^\times$ to $b(g,h)=\xi(gh)/(\xi(g)\xi(h))$. This leads to the following key tool:

\begin{lemma}\label{short-exact-mini} For any finite group $G$, we have a short exact sequence

$$0\to \Ghat \to \mathrm{Quad}(G)\to \mathrm{SymBil}(G)\to0,$$

where $\Ghat$ is the character group of $G$, $\mathrm{Quad}(G)$ is the group of quadratic forms, and $\mathrm{SymBil}(G)$ is the group of symmetric bilinear maps $G\times G\to \CC^\times$.

\end{lemma}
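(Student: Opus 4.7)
The plan is to verify the three parts of exactness separately, treating surjectivity as the main step. The map $\varphi: \mathrm{Quad}(G) \to \mathrm{SymBil}(G)$ sending $\xi$ to $b_\xi$ is a well-defined homomorphism almost by construction: the definition of quadratic form gives bilinearity of $b_\xi$, abelianness of $G$ gives symmetry, and the ratio $\xi(gh)/(\xi(g)\xi(h))$ is multiplicative in $\xi$. For the kernel, the condition $b_\xi \equiv 1$ is exactly $\xi(gh) = \xi(g)\xi(h)$, identifying $\ker \varphi$ with $\widehat{G}$.

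The heart of the proof is surjectivity. The strategy is to reduce to cyclic factors via the structure theorem. Write $G = \prod_{i=1}^k \mathbb{Z}_{n_i}$ with generators $e_i$, so that a symmetric bilinear form $b$ is encoded by a matrix $\zeta_{ij} = b(e_i, e_j)$ satisfying $\zeta_{ij} = \zeta_{ji}$, $\zeta_{ii}^{n_i} = 1$, and $\zeta_{ij}^{\gcd(n_i,n_j)} = 1$. I would try the ansatz
$$\xi(x_1, \ldots, x_k) = \prod_i \xi_i(x_i) \cdot \prod_{i < j} \zeta_{ij}^{x_i x_j},$$
and compute that the off-diagonal factors contribute exactly $\prod_{i<j} \zeta_{ij}^{x_i y_j + x_j y_i}$ to $b_\xi(g,h)$, which coincides with the off-diagonal part of $b$ after using symmetry to fold $\zeta_{ji}^{x_j y_i}$ into $\zeta_{ij}^{x_j y_i}$. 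All that remains is, for each $i$, to find a quadratic form $\xi_i: \mathbb{Z}_{n_i} \to \mathbb{C}^\times$ whose associated bilinear form is $(j,k) \mapsto \zeta_{ii}^{jk}$.

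This cyclic lifting problem is the main obstacle, and it requires a small parity argument. The natural candidate is $\xi_n(k) = \alpha^{k^2}$ for a chosen square root $\alpha$ of $\zeta := \zeta_{ii}$, since then $b_{\xi_n}(j,k) = \alpha^{(j+k)^2 - j^2 - k^2} = \alpha^{2jk} = \zeta^{jk}$. The subtlety is that $k^2$ is only well-defined modulo $n$ up to shifts of $2kn + n^2$, so $\xi_n$ descends to $\mathbb{Z}_n$ precisely when $\alpha^{2n} = 1$ and $\alpha^{n^2} = 1$. When $n$ is even, any $2n$-th root $\alpha$ with $\alpha^2 = \zeta$ suffices, as then $\alpha^{n^2} = (\alpha^{2n})^{n/2} = 1$. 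When $n$ is odd, $2$ is invertible modulo $n$, so I would take $\alpha = \zeta^{(n+1)/2}$, which satisfies $\alpha^2 = \zeta^{n+1} = \zeta$ and $\alpha^n = \zeta^{n(n+1)/2} = 1$. Either way $\xi_n$ is well-defined and lifts the prescribed bilinear form.

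Assembling these pieces gives a quadratic form $\xi$ with $b_\xi = b$, establishing surjectivity of $\varphi$ and completing the short exact sequence. I expect no subtlety beyond the parity issue for the cyclic factors; the global assembly is routine once the cyclic case is handled, and the kernel computation is immediate.
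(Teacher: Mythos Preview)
Your proof is correct. The paper does not actually prove this lemma: it simply cites it as a well-known result, pointing to Wall's original paper and to Theorem~2.1 of Basak--Johnson. So your proposal supplies a self-contained argument where the paper gives none.

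As a brief comparison: the cited references also proceed by reducing to cyclic factors via the structure theorem, so your route is standard. The only point where one might tighten the exposition is the well-definedness check for the cross terms $\zeta_{ij}^{x_i x_j}$: you should note explicitly that $\zeta_{ij}^{n_i}=\zeta_{ij}^{n_j}=1$ (which follows from bilinearity, and is stronger than the stated $\zeta_{ij}^{\gcd(n_i,n_j)}=1$), so that the exponent $x_i x_j$ is unambiguous modulo the orders of the factors. Beyond that cosmetic point, the parity split for the diagonal lift is handled correctly and the argument goes through.
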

\begin{proof} This is a well known result, c.f. Wall's original paper \cite{wall1963quadratic}, and Theorem 2.1 in \cite{basak2015indicators}.
\end{proof}

Proposition \ref{short-exact} can be seen as a quantum-information analogue of Lemma \ref{short-exact-mini}. We now describe the matrix theory of automorphisms, which can also be found in more detail in introductory texts such as \cite{hillar2007automorphisms}. The fundamental theorem of finite abelian groups tells us that

$$G\cong \ZZ_{q_0}\times \ZZ_{q_1}\times\ZZ_{q_2}...\times \ZZ_{q_d}$$

with $q_i | q_{i-1}$. Notationally, we write elements of $G\times\Ghat$ as vectors

\[
\begin{blockarray}{cc}
\begin{block}{c(c)}
  \ZZ_{q_0} & ...  \\
  ... & ...\\
  \ZZ_{q_d} & ... \\
\cline{2-2}
  \widehat{\ZZ}_{q_0} & ... \\
  ... & ...  \\
  \widehat{\ZZ}_{q_d} & ...  \\
\end{block}
\end{blockarray}
 \]

Additionally, we can write elements of $G\times \Ghat$ as matrices. A-priori, elements of diffrent cyclic components are incomparable. This is fixed by viewing each $\ZZ_{q_i}$ as the subgroup of $\ZZ_{q_0}$ consisting of multiples of $q_0/q_i$. For convenience with comparison between group elements and characters we canonically identify $\widehat{\ZZ}_{q_i}$ with $\ZZ_{q_i}$ by working with the distinguished generator $n\mapsto e^{2\pi i n / q_i}$.

Every homomorphism between cyclic groups is a scalar multiplication. Hence, we can compress data by writing the entries of the matrix as integers which represent scalar multiplication. For example, if $G=\ZZ_4\times \ZZ_2$ we have the matrix equation

\[
\begin{blockarray}{ccccc}
& \ZZ_{4} & \ZZ_{2} & \widehat{\ZZ}_4 & \widehat{\ZZ}_2 \\
\begin{block}{c(cc | cc)}
  \ZZ_{4} & 2 & 1 & 3 & 0\\
  \ZZ_{2} & 2 & 0 & 2  & 0\\
\cline{2-5}
  \widehat{\ZZ}_{4} & 1 & 0 & 0 & 1\\
  \widehat{\ZZ}_{2} & 0 & 1 & 0 & 1\\
\end{block}
\end{blockarray}
\begin{pmatrix}
1\\
0\\
\hline
3\\
2
\end{pmatrix}
=
\begin{pmatrix}
2\cdot 1 + 1\cdot 0 + 3\cdot 3 + 0\cdot 2\\
2\cdot 1 + 0\cdot 0 + 2\cdot 3 + 0\cdot 2 \\
\hline
 1\cdot 1 + 0\cdot 0 + 0\cdot 3 + 1\cdot 2\\
0\cdot 1 + 1\cdot 0 + 0\cdot 3 + 1\cdot 2\\
\end{pmatrix}
=
\begin{pmatrix}
3\\
0\\
\hline
3\\
2
\end{pmatrix}
 \]

We now can begin our proof:

\begin{proof}[Proof of Theorem \ref{main-theorem-1}]

We begin with a generic automorphism of $G\times \Ghat$ in $\Sp(G)$ and reduce to the identity matrix using automorphisms of the form $A_\tau$, $S_\xi$, and $\FF_i$. We proceed by induction on the number of cyclic factors in the canonical decomposition of $G$. Suppose that we can always reduce elements of $\Sp(G)$ to matrices of the form

\[
\begin{blockarray}{ccccccc}
& \ZZ_{q_0} & ... & \ZZ_{q_d} & \widehat{\ZZ}_{q_0} & ... & \widehat{\ZZ}_{q_d} \\
\begin{block}{c(c | cc | c | cc)}
  \ZZ_{q_0} & 1 & 0 & 0 & 0 & 0 & 0 \\
\cline{2-7}
  ... & 0 & ... & ... & 0 & ... & ... \\
  \ZZ_{q_d} & 0 & ... & ...  & 0 & ... & ...\\
\cline{2-7}
  \widehat{\ZZ}_{q_0} & 0 & 0 & 0 & 1 & 0 & 0 \\
\cline{2-7}
  ... & 0 & ... & ... & 0 & ... & ... \\
  \widehat{\ZZ}_{q_d} & 0 & ... & ... & 0 & ... & ... \\
\end{block}
\end{blockarray}
 \]

Then, further row reductions can be performed on the factors $\ZZ_{q_1}...\ZZ_{q_d}$ without affecting the $\ZZ_{q_0}$-entries. This is true for $A_\tau$ and $S_\xi$ gates because applying automorphisms and phases on one collection of cyclic factors does not affect the others, and it is true for $\FF_i$ gates because we can apply partial Fourier transforms by Proposition \ref{fourier-decomposition}. In particular, by induction we would arrive at a proof of the result. Thus, the proof amounts to finding a series of reductions to arrive at a matrix of this form. We label our generic matrix as below:

\[
\begin{blockarray}{ccccccc}
& \ZZ_{q_0} & ... & \ZZ_{q_d} & \widehat{\ZZ}_{q_0} & ... & \widehat{\ZZ}_{q_d} \\
\begin{block}{c(c | cc | c | cc)}
  \ZZ_{q_0} & r_0 & ... & ... & ... & ... & ... \\
\cline{2-7}
  ... & ... & ... & ... & ... & ... & ... \\
  \ZZ_{q_d} & r_d & ... & ...  & ... & ... & ...\\
\cline{2-7}
  \widehat{\ZZ}_{q_0} & r_0' & ... & ... & ... & ... & ... \\
\cline{2-7}
  ... & ... & ... & ... & ... & ... & ... \\
  \widehat{\ZZ}_{q_d} & r_d' & ... & ... & ... & ... & ... \\
\end{block}
\end{blockarray}
 \]

We start by applying $S_\xi$ operators. By the short exact sequence of Lemma \ref{short-exact-mini}, any homomorphism $b:G\to \Ghat$ with $b(x)(y)=b(y)(x)$ can be induced as $g\mapsto b_\xi(g,-)=\xi(g\cdot - )/(\xi(g)\xi(-))$. We construct such a map $b$ now. By Bezout's lemma, there exists integers $k_i$ such that $r_i+k_ir_i'=\gcd(r_i,r_i')$ in $\ZZ_{q_i}$. Let $b$ be the map which acts as multiplication by $k_i$ on the $\ZZ_{q_i}$ component. Postcomposing with $b$ and using the commutation relation in Proposition \ref{clifford-prop}, we reduce our matrix to

\[
\begin{blockarray}{ccccccc}
& \ZZ_{q_0} & ... & \ZZ_{q_d} & \widehat{\ZZ}_{q_0} & ... & \widehat{\ZZ}_{q_d} \\
\begin{block}{c(c | cc | c | cc)}
  \ZZ_{q_0} & r_0 & ... & ... & ... & ... & ... \\
\cline{2-7}
  ... & ... & ... & ... & ... & ... & ... \\
  \ZZ_{q_d} & r_d & ... & ...  & ... & ... & ...\\
\cline{2-7}
  \widehat{\ZZ}_{q_0} & \gcd(r_0,r_0') & ... & ... & ... & ... & ... \\
\cline{2-7}
  ... & ... & ... & ... & ... & ... & ... \\
  \widehat{\ZZ}_{q_d} & \gcd(r_d,r_d') & ... & ... & ... & ... & ... \\
\end{block}
\end{blockarray}
 \]

Now, let $i:G\xrightarrow{\sim} \Ghat$ be the isomorphism which acts by $1$ on each component. Postcomposing with the Fourier transform $\FF_i$ has the effect of swapping the $\ZZ_{q_i}$ and $\widehat{\ZZ}_{q_i}$ components by Proposition \ref{clifford-prop}. Thus, we arrive at 

\[
\begin{blockarray}{ccccccc}
& \ZZ_{q_0} & ... & \ZZ_{q_d} & \widehat{\ZZ}_{q_0} & ... & \widehat{\ZZ}_{q_d} \\
\begin{block}{c(c | cc | c | cc)}
  \ZZ_{q_0} & \gcd(r_0,r_0') & ... & ... & ... & ... & ... \\
\cline{2-7}
  ... & ... & ... & ... & ... & ... & ... \\
  \ZZ_{q_d} & \gcd(r_d,r_d') & ... & ...  & ... & ... & ...\\
\cline{2-7}
  \widehat{\ZZ}_{q_0} & r_0& ... & ... & ... & ... & ... \\
\cline{2-7}
  ... & ... & ... & ... & ... & ... & ... \\
  \widehat{\ZZ}_{q_d} & r_d & ... & ... & ... & ... & ... \\
\end{block}
\end{blockarray}
 \]

Now, since the matrix is an automorphism the greatest common divisor of the first column must be $1$. Hence, the greatest common divisor of $\gcd(r_0,r_0')...\gcd(r_d,r_d')$ is $1$. Using the fact that $q_0$ is an element of maximal order in $G$, we know there exists an automorphism $\tau$ of $G$ which maps $(\gcd(r_0,r_0'),...\gcd(r_d,r_d'))$ to $(1,0,0..,0)$. Applying $A_\tau$ and usiring the commutation relationship of Proposition \ref{clifford-prop}, we reduce our matrix to

\[
\begin{blockarray}{ccccccc}
& \ZZ_{q_0} & ... & \ZZ_{q_d} & \widehat{\ZZ}_{q_0} & ... & \widehat{\ZZ}_{q_d} \\
\begin{block}{c(c | cc | c | cc)}
  \ZZ_{q_0} & 1 & ... & ... & ... & ... & ... \\
\cline{2-7}
  ... & 0 & ... & ... & ... & ... & ... \\
  \ZZ_{q_d} &0 & ... & ...  & ... & ... & ...\\
\cline{2-7}
  \widehat{\ZZ}_{q_0} & r_0'' & ... & ... & ... & ... & ... \\
\cline{2-7}
  ... & ... & ... & ... & ... & ... & ... \\
  \widehat{\ZZ}_{q_d} & r_d'' & ... & ... & ... & ... & ... \\
\end{block}
\end{blockarray}
 \]

Now, define the homomorphism $b:G\to \Ghat$ by sending $(1,0,0..,0)$ to $(-r_0'',-r_1''...,-r_d'')$ and acting by $0$ on all other components. Using Lemma \ref{short-exact-mini} we know that $b$ is induced by a quadratic form $\xi$. Appling the corresponding $S_\xi$ gate, we reduce without loss of generality to

\[
\begin{blockarray}{ccccccc}
& \ZZ_{q_0} & ... & \ZZ_{q_d} & \widehat{\ZZ}_{q_0} & ... & \widehat{\ZZ}_{q_d} \\
\begin{block}{c(c | cc | c | cc)}
  \ZZ_{q_0} & 1 & ... & ... & ... & ... & ... \\
\cline{2-7}
  ... & 0 & ... & ... & ... & ... & ... \\
  \ZZ_{q_d} &0 & ... & ...  & ... & ... & ...\\
\cline{2-7}
  \widehat{\ZZ}_{q_0} & 0 & ... & ... & r & ... & ... \\
\cline{2-7}
  ... & 0 & ... & ... & ... & ... & ... \\
  \widehat{\ZZ}_{q_d} & 0 & ... & ... & ... & ... & ... \\
\end{block}
\end{blockarray}
 \]

We now apply the symplectic condition of $\Sp(G)$ to the pair of pairs $x=(g_0,0)$ and $y=(0,\chi_1)$, where $g_0=(1,0,0,...0)$, $\chi_1=(1,0,0,...0)$. It gives us that $r=b(\sigma(x),\sigma(y))=b(x,y)=1$. Hence, $r=1$. We now repeat the same procedure as before but dualized, \textit{postcomposing} with the appropriate $A_\tau$ and $S_\xi$ gates. This reduces us to

\[
\begin{blockarray}{ccccccc}
& \ZZ_{q_0} & ... & \ZZ_{q_d} & \widehat{\ZZ}_{q_0} & ... & \widehat{\ZZ}_{q_d} \\
\begin{block}{c(c | cc | c | cc)}
  \ZZ_{q_0} & 1 & ... & ... & 0 & ... & ... \\
\cline{2-7}
  ... & 0 & ... & ... & 0 & ... & ... \\
  \ZZ_{q_d} &0 & ... & ...  & 0 & ... & ...\\
\cline{2-7}
  \widehat{\ZZ}_{q_0} & 0 & ... & ... & 1 & ... & ... \\
\cline{2-7}
  ... & 0 & ... & ... & 0 & ... & ... \\
  \widehat{\ZZ}_{q_d} & 0 & ... & ... & 0 & ... & ... \\
\end{block}
\end{blockarray}
 \]

It is simple to check that such postcomposition does not affect the entries in the first column, and thus that our reduction is valid. We now apply the symplecticity condition with the pair of pairs $x=(0,\chi_0)$ and $y=(g_1,\chi_1)$, where $\chi_0=(1,0,0,...0)$. Varying $g_1$ and $\chi_1$, we conclude that the first row has all zeros after the 1st entry. Similarly, applying the symplecticity condition with $x=(g_0,0)$, $g_0=(1,0,0,...0)$ we conclude that the $\widehat{\ZZ}_{q_0}$ row has all zeros outside of the $(\widehat{\ZZ}_{q_0},\widehat{\ZZ}_{q_0})$ entry. Thus, we arrive at the reduction

\[
\begin{blockarray}{ccccccc}
& \ZZ_{q_0} & ... & \ZZ_{q_d} & \widehat{\ZZ}_{q_0} & ... & \widehat{\ZZ}_{q_d} \\
\begin{block}{c(c | cc | c | cc)}
  \ZZ_{q_0} & 1 & 0 & 0 & 0 & 0 & 0 \\
\cline{2-7}
  ... & 0 & ... & ... & 0 & ... & ... \\
  \ZZ_{q_d} &0 & ... & ...  & 0 & ... & ...\\
\cline{2-7}
  \widehat{\ZZ}_{q_0} & 0 & 0 & 0 & 1 & 0 & 0 \\
\cline{2-7}
  ... & 0 & ... & ... & 0 & ... & ... \\
  \widehat{\ZZ}_{q_d} & 0 & ... & ... & 0 & ... & ... \\
\end{block}
\end{blockarray}
 \]

By induction on the number of cyclic factors in $G$, we conclude the result.
\end{proof}

\section{Conclusion and future directions}
\label{conclusion}

One interesting question is to what extent the difference between $G$-Cliffords for different $G$ can be exploited. For instance, the primitive $\ZZ_8$-Pauli $Z_\chi$ operator gives the $T$-gate on a subset. Hence, $\ZZ_2$-Clifford + $\ZZ_8$-Pauli is universal. Going even further, it has been shown that $\ZZ_2$-Clifford+$\ZZ_4$-Clifford is universal in \cite{moussa2015quantum}. A natural question is whether or not one can glue together two abelian phases along some domain wall to exploit this connection for a universal quantum computation scheme. One unsuccessful attempt in this direction was made in \cite{moussa2016transversal}. A motivation for the fact that this could be possible is that excitations in domain boundaries of abelian phases are non-abelian, and hence when new quantum computing primitives such as projective charge measurement are allowed one can arrive at universal quantum computation in an abelian phase \cite{cong2016topological, cong2017universal}. The theory of domain walls is now well established \cite{kesselring2018boundaries, scruby2020hierarchy, bombin2023logical}, and placing two different phases together to simplify universal quantum computation has been successfully performed. Specifically, it has been shown that the universal power of $S_3$ anyons can be passed through a domain wall to $\ZZ_2$ anyons, where the simpler theory can be used to store and manipulate the quantum information \cite{laubscher2019universal}. It is unclear whether or not such a construction can be made universal purely with abelian anyons.

While there are a medley of approaches to avoiding no-go theorems for universal quantum computation with abelian anyons \cite{bravyi2005universal, paetznick2013universal, bombin2015gauge, brown2020fault}, one of the most interesting approaches still is to go to non-abelian groups. Storing information in the internal degrees of freedom of excited states allows for universal quantum computation with every non-nilpotent group \cite{mochon2004anyon, cui2015universal}. One interesting open problem is to determine the compuational power of the system where quantum information is stored in the fusion space of boundary excitations instead. Stated more loosely: what is the power of the ``non-abelian surface code"? This model is poorly studied, and was only shown recently to formally give an error correcting code \cite{cui2020kitaev}. Adding any gate to the $\ZZ_2$-Clifford group allows for universal quantum computation. Expecting non-abelian groups to be strictly more powerful than abelian ones, it is natural to conjecture that even nilpotent groups could give universal quantum computing. These theories are especially relevant seeing as the quantum double model associated to the (nilpotent) dihedral group $D_4$ has been implemented on quantum processor \cite{iqbal2023creation}. One of the big hurdles towards this development is the complexity of boundaries in non-abelian models \cite{beigi2011quantum, cong2016topological}. Dealing properly with boundaries is key to a good theory of non-abelian surface codes.

Even for abelian anyons it is not trivial how to realize the most powerful gate set possible. A naive implementation would only give single qubit Clifford gates and a two-qubit CNOT gate, which as we demonstrated in Proposition \ref{counter-example} is not enough to generate the full Clifford group for non-cyclic gauge groups. Properly dealing with the various possible boundary theories seems to be the key challenge.

\bibliographystyle{ieeetr}
\bibliography{ref}

\end{document}